\documentclass[journal]{IEEEtran}
\usepackage{graphicx}
\usepackage{epsfig}
\usepackage{latexsym}
\usepackage{amsfonts}
\usepackage{here}
\usepackage{rawfonts}
\usepackage[utf8]{inputenc}
\usepackage[T1]{fontenc}
\usepackage{calc}
\usepackage{url}
\usepackage{enumerate}
\usepackage{color}
\usepackage[tbtags]{amsmath}
\usepackage{amssymb}
\usepackage{upref}
\usepackage{epic,eepic}
\usepackage{times}
\usepackage{dsfont}
\usepackage{comment}
\usepackage{cite}

\usepackage[font={small}]{caption}
\usepackage[font={small}]{subcaption}

\usepackage{stfloats}
\usepackage{mathtools}

\usepackage{algorithm}
\usepackage{algpseudocode}

\newtheorem{theorem}{\bf{Theorem}}
\newtheorem{definition}{\bf{Definition}}

\newtheorem{remark}{Remark}

\usepackage{tikz}
\usetikzlibrary{calc}
\usetikzlibrary{arrows.meta,positioning,fit,calc,decorations.pathreplacing}

\ifCLASSINFOpdf
\else
\fi
\begin{document}

	\title{Out-of-Air Computation: Enabling Structured Extraction from Wireless Superposition}
	
	\author{{Seyed Mohammad Azimi-Abarghouyi,~\IEEEmembership{Member,~IEEE}}

		\thanks{The author is with the Department of Electrical Engineering, Chalmers University of Technology, Gothenburg, Sweden (Email: azimimo@chalmers.se).}
	}

	\maketitle
	
	\vspace{-15pt}

	\begin{abstract}
		Over-the-air computation (AirComp) has traditionally been built on the principle of pre-embedding computation into transmitted waveforms or on exploiting massive antenna arrays, often requiring the wireless multiple-access channel (MAC) to operate under conditions that approximate an ideal computational medium. This paper introduces a new computation framework, termed out-of-air computation (AirCPU), which establishes a joint source--channel coding foundation in which computation is not embedded before transmission but is instead extracted from the wireless superposition by exploiting structured coding. AirCPU operates directly on continuous-valued device data, avoiding the need for a separate source quantization stage, and employs a multi-layer nested lattice architecture that enables progressive resolution by decomposing each input into hierarchically scaled components, all transmitted over a common bounded digital constellation under a fixed power constraint. We formalize the notion of decoupled resolution, showing that in operating regimes where the decoding error probability is sufficiently small, the impact of channel noise and finite constellation constraints on distortion becomes negligible, and the resulting computation error is primarily determined by the target resolution set by the finest lattice. For fading MACs, we further introduce collective and successive computation mechanisms, in addition to the proposed direct computation, which exploit multiple decoded integer-coefficient functions and side-information functions as structural representations of the wireless superposition to significantly expand the reliable operating regime; in this context, we formulate and characterize the underlying reliability conditions and integer optimization problems, and develop a structured low-complexity two-group approximation to address them. Simulations over both Gaussian and fading MACs corroborate the analysis and indicate a reliability transition in which the distortion stabilizes once reliable decoding is achieved. The results further demonstrate notable performance improvements over representative analog and digital AirComp baselines under the considered settings, and quantify the benefits of collective and successive computation in expanding the reliable operating regime under fading.
	\end{abstract}

	\vspace{0pt}
	\begin{IEEEkeywords}
		Distributed computation, over-the-air computation, joint source--channel coding, lattice codes, digital communications
	\end{IEEEkeywords}
	\vspace{-10pt}
	\section{Introduction}
	
	Over-the-air computation (AirComp) \cite{nazer_general, goldenbaum} has recently attracted significant attention as a "joint communication--computation" paradigm that transforms wireless multiple-access channels (MACs) into compute engines. By exploiting the superposition property of the wireless medium, AirComp enables a fusion center to directly recover a desired function of distributed edge signals---such as the sum function---from the simultaneous transmissions of all devices, rather than first decoding each individual signal. This principle has been widely advocated for large-scale sensing, Internet of Things (IoT) data aggregation, and distributed and federated learning (FL), where the underlying tasks are inherently dominated by linear aggregation operations. As a result, AirComp promises substantial reductions in communication latency, bandwidth, and energy consumption compared to conventional "communicate-then-compute" architectures \cite{aircompsurvey, aircompmag, airflmag}.
	
	Most existing works on AirComp rely on purely analog transmission architectures \cite{kaibin_aircomp, kaibin_airfl, ding_airfl, gunduz_airfl, deniz_mis, kaibin_aircomp2, azimi_wafel, huang_neural, gunduz_blind, deniz_inference}. In these schemes, each device maps its local data onto continuous-amplitude baseband waveforms through analog modulation, and the fusion center estimates the desired function by applying appropriate equalization to the superimposed received signal. While this fully analog design is conceptually appealing, its practical performance can depend sensitively on amplitude and phase fidelity, as it relies on freely shaped modulation. Moreover, because the computation is carried out directly in the analog domain, the receiver output becomes sensitive to channel fading, phase misalignment, and noise, which may lead to increased error accumulation and reduced accuracy of the computed function, particularly in low-to-moderate signal-to-noise ratio (SNR) regimes. In this setting, reliability cannot be defined independently of distortion, as the channel always governs the computation error. Also, such architectures may incur relatively high implementation costs, as dedicated analog modulators need to be implemented on participating devices, which can limit scalability as the number of devices grows.
	
	To overcome these limitations, a new paradigm has recently emerged that seeks to realize AirComp using \emph{digital} communication techniques, often referred to as digital AirComp \cite{deniz_onebit, saeed_channelcomp, saeed_sumcomp, saeed_blind, deniz_digital, brinton_digital, kaibin_digital, huang_digital, zey_digital, saeed_optimal, saeed_vector}. The central idea is to retain the functional advantages of AirComp while leveraging the mature physical-layer infrastructure of modern digital wireless systems. In particular, digital AirComp architectures build upon existing digital modulation chipsets already deployed in communication standards, enabling seamless integration. By appropriately encoding quantized local data into finite-alphabet symbols through channel coding, these schemes inherit some level of robustness and interoperability from conventional digital links, thereby enabling improved reliability and controllable computation error. However, existing schemes fundamentally assume that inputs are quantized a priori, requiring a separate source quantization stage that can become a bottleneck. In this context, comparisons between analog and digital AirComp should be interpreted with care, as different works adopt different modeling and design assumptions. For example, some studies evaluate analog AirComp on discretized inputs to enable like-for-like comparison with finite-alphabet digital transmissions \cite{saeed_sumcomp, zey_digital}. Moreover, performance gains may stem from additional system-level design degrees of freedom, such as advanced transceiver and resource optimization \cite{huang_digital}, or nonlinear MAP detection in certain digital architectures \cite{kaibin_digital}. These observations indicate that the relative performance of analog and digital AirComp depends strongly on system design choices and problem formulation, rather than a clear intrinsic superiority of digital over analog modulation for function computation.
	
The existing analog and digital AirComp literature has evolved along two main design paradigms. The dominant line of work assumes the availability of accurate channel state information at the transmitters (CSIT), enabling devices to apply power control and channel inversion so that their signals are coherently aligned at the fusion center \cite{kaibin_aircomp, kaibin_airfl, ding_airfl, gunduz_airfl, deniz_mis, kaibin_aircomp2, azimi_wafel, huang_neural, deniz_inference, deniz_onebit, saeed_channelcomp, saeed_sumcomp, deniz_digital, brinton_digital, kaibin_digital, huang_digital, zey_digital, saeed_optimal, saeed_vector}. Under these assumptions, carefully designed precoding and receive equalization can significantly reduce computation error; however, such gains rely on channel estimation via feedback mechanisms and on adaptive power-control hardware, which can introduce considerable overhead and system complexity, especially in dense, mobile, or energy-constrained scenarios. Furthermore, in time-varying channels, maintaining accurate CSIT requires predicting future channel states, which adds additional uncertainty, and in poor channel conditions, channel inversion may require excessively high transmission power. This reliance on per-device CSIT and adaptive power control may differ from the operating principles of many current wireless systems---and emerging standards---where devices typically transmit with fixed or only loosely controlled power and without access to instantaneous CSIT \cite{book}, making such assumptions potentially difficult to sustain in practical deployments.
	
	A second design class avoids CSIT by relying on channel statistics and massive antenna arrays at the fusion center, exploiting favorable propagation and the law of large numbers to average out channel randomness and interference \cite{gunduz_blind, saeed_blind}. While this blind paradigm reduces transmitter-side complexity, its performance guarantees hold primarily in asymptotic regimes with very large numbers of antennas and idealized propagation models. In practical edge systems, deploying and maintaining such large-scale arrays for computation may introduce additional cost and implementation complexity.
	
As a result, across both paradigms, AirComp is typically developed under specific system assumptions, and its performance is closely tied to the corresponding design choices. When these assumptions are not satisfied, AirComp may incur noticeable computation errors or may not be directly applicable.

	At a more fundamental level, this sensitivity stems from how computation is realized in traditional AirComp. Computation is largely predetermined: with CSIT, devices carefully pre-shape their signals so that the superposition directly yields the target function, while with massive antenna arrays, undesirable interference terms statistically vanish. In both cases, the wireless medium acts primarily as a passive carrier---hence the term \textit{over the air}. In this sense, AirComp attempts to approximate an ideal Gaussian MAC as the computational substrate, largely neutralizing the intrinsic dynamics of the wireless channel. This inherently limits the range of scenarios in which the approach can be applied in its original form.
	
In this work, we introduce \textit{out-of-air computation} (AirCPU), a new framework for digital function computation over MACs that departs from the approximation logic underlying traditional AirComp. In particular, while AirComp follows an embedding principle---imposing the target function onto transmitted waveforms prior to propagation---AirCPU adopts an extraction principle, where the receiver reconstructs the desired function directly from the wireless superposition by exploiting its induced structured, information-rich nature. Under this paradigm, devices transmit blindly, without coordination or CSIT, and computation emerges \textit{out of the air} itself, with the medium actively shaping it over the uncompensated fading MAC, without requiring large antenna arrays. Rather than enforcing a predetermined computation model, AirCPU develops structured transmission and decoding strategies that enable reliable function extraction. In doing so, it uncovers previously unexplored computational capabilities of the wireless medium and enables new modes of computation and adaptation to diverse network deployments.
	
Our framework establishes an end-to-end continuous-to-continuous joint source--channel coding architecture that operates directly on continuous-valued local data without separate source quantization. This represents a conceptual shift, since the quantities involved in distributed computation are inherently continuous, whereas existing digital designs are typically built on discrete-value abstractions.
	
	Starting from continuous-valued inputs, AirCPU employs lattice codes with controllable resolution to jointly quantize and modulate data while simultaneously providing robustness against wireless channels and noise through a multi-layer nested structure. This design enables reliable decoding up to any chosen resolution, yielding sufficiently small decoding error probability under appropriate operating conditions at the fusion center. To this end, each input is decomposed into a hierarchy of lattice components, all transmitted over a common finite-power constellation, resulting in a constant-power practical transmitter architecture. 
	
	AirCPU is conceptually related to and inspired by our earlier single-layer lattice framework in \cite{fedcpu, fedcpu_conf}, which was specifically developed for AirComp-based aggregation in FL; however, the present work introduces a different architecture and scope, establishing new foundations and a broader computational framework. A key distinguishing feature of AirCPU is its progressive resolution property: the computation output can be reliably recovered at increasingly finer resolutions as additional lattice layers are decoded, without requiring any change in transmit power, constellation size, or signaling structure. Each decoded layer refines the recovered function value in a strictly hierarchical manner, enabling graceful accuracy improvement rather than the abrupt error behavior typical of conventional schemes. Moreover, AirCPU enables a decoupling of the target resolution from both the wireless channel and the finite constellation. Once reliable decoding is achieved for a given set of lattice layers, channel noise and constellation constraints contribute to the computation distortion primarily through rare decoding failures, and the resulting error is dominated by the selected target resolution, determined by the finest lattice in the hierarchy. This behavior differs from digital AirComp, where quantization resolution is intrinsically coupled to constellation order, and improving accuracy requires enlarging the constellation or increasing transmit power. We characterize AirCPU reliability via a geometry-rooted decoding condition that determines whether a given resolution level can be successfully recovered.

We then introduce two novel concepts beyond our \textit{direct computation}: \textit{collective computation}, in which a set of function representations with higher reliability are exploited to improve the decoding of the target function, and \textit{successive computation}, in which previously decoded functions are used as side information to further enhance its reliability. These mechanisms rely on interpreting decoded integer-coefficient functions as structural representations of the wireless superposition, which are then processed to recover the desired function. For each approach, we design the corresponding fusion center receiver architectures, formulate the associated integer optimization problems, characterize their structure and reliability conditions, and propose scalable low-complexity solution methods based on a two-group approximation to address the resulting combinatorial designs.

	Finally, simulations validate the proposed principles. The results show that AirCPU exhibits a distinctive reliability behavior: above a certain SNR threshold, the computation error, measured in terms of mean-squared error (MSE), is governed by the resolution set by the finest lattice second moment. Across both Gaussian and fading MACs, AirCPU demonstrates substantial performance gains over representative analog and digital AirComp baselines, with a controlled increase in the number of transmissions, while the proposed collective and successive mechanisms provide additional gains by significantly enlarging the reliable region and reducing the SNR required to reach the resolution-determined MSE floor.
	
	\vspace{-6pt}
\section{Lattice Preliminaries}

\subsection{Prior Use of Lattice Codes in Digital Communications}

To position the proposed framework within existing literature, we briefly review the role of lattice codes in conventional digital communications and highlight key differences.

Lattice codes are well established for both source and channel coding with strong information-theoretic foundations. For source coding, dithered lattice quantization achieves near rate-distortion-optimal representations of continuous signals \cite{rzamir} and has also been applied to model quantization in FL \cite{eldar}. For channel coding, lattice codes enable compute-and-forward (CF) relaying by exploiting modulo-lattice operations that yield finite-field linear combinations for end-to-end bit transmission \cite{nazer, nazermimo, azimi_cf1, azimi_cf2, nazer_suc}. In CF, relays decode such combinations of transmitted codewords---despite not carrying direct functional meaning---and forward them to enable recovery of individual messages; thus, these combinations act as auxiliary algebraic constructs. These approaches rely on single-layer lattice codes.

From a coding perspective, prior works adopt either a source-coding viewpoint (lattice quantization) or a channel-coding viewpoint (CF), and even when combined, largely follow the classical separation paradigm in which compression and communication are designed independently. As a result, CF primarily serves as a communication-oriented achievability framework for recovering finite-field combinations, rather than a physical-layer architecture tailored to computing a specific target function. Consequently, it follows a communicate-then-compute approach that is not naturally aligned with settings where the goal is to directly compute functions of distributed data. Since CF operates over finite-field linear combinations and is primarily oriented toward message recovery, it does not directly produce real-valued function outputs. Instead, such functions are obtained only after sufficient combinations are decoded, often requiring recovery of many or all underlying messages followed by mapping bits to real values, thereby prioritizing communication over computation and precluding direct representation of the desired function due to finite-field constraints. Moreover, CF adopts a discrete reliability notion, where even small decoding errors, such as mapping to a neighboring lattice point, lead to incorrect message recovery.

In contrast, the proposed framework adopts a different transmission and computation architecture. It operates directly in the real field using a multi-layer nested lattice structure, without modulo operations or finite-field mappings, and constitutes a concrete physical-layer scheme. It follows a joint source--channel coding principle in which continuous-valued data are directly mapped to channel inputs and reconstructed via structured decoding, eliminating separation between compression and communication. The transmission is explicitly designed such that each signal represents components of the desired function, and wireless superposition is directly exploited for function extraction.

Accordingly, the decoded integer-coefficient functions in the proposed framework are not merely intermediate constructs, but meaningful representations of the desired function, each capturing a structured component of it. Furthermore, decoding errors manifest as controlled perturbations of the reconstructed function, leading to gradual distortion rather than decoding failure; thus, imperfect decoding can still yield useful approximations, unlike the strict reliability criterion in CF.

Moreover, the multi-layer design enables progressive refinement of computation accuracy and provides a degree of separation between reliability and distortion, which is not addressed in conventional lattice-based communication schemes.

While some decoding expressions resemble those in~\cite{nazer}, this similarity is due to fundamental lattice properties and is structural rather than conceptual. Accordingly, although both approaches employ lattice structures, they differ fundamentally in objective, operating domain, transmission design, decoding interpretation, and performance criteria.
	
	\vspace{-5pt}
	\subsection{Multi-Layer Nested Lattice Code}
	
	Consider a finest lattice $\Lambda_{1} \subset \mathbb{R}^n$ generated by a full-rank matrix 
	$\mathbf{G}_{1} \in \mathbb{R}^{n \times n}$:
	\begin{align}
		\Lambda_{1} = \big\{ \mathbf{G}_{1} \mathbf{z} : \mathbf{z} \in \mathbb{Z}^n \big\}.
	\end{align}
	This lattice serves as the basis for constructing an $L$-layer nested lattice structure. 
	Each coarser lattice in the hierarchy is obtained by integer scaling of the previous (finer) lattice:
	\begin{align}
		\Lambda_{\ell+1} \triangleq \rho \Lambda_{\ell}, 
		\qquad \rho \in \mathbb{N},\ \rho \ge 2.
	\end{align}
	Equivalently, the corresponding generator matrices satisfy
	\begin{align}
		\mathbf{G}_{\ell+1} = \rho \mathbf{G}_{\ell}, 
		\qquad 
		\Lambda_{\ell} = \big\{ \mathbf{G}_{\ell} \mathbf{z} : \mathbf{z} \in \mathbb{Z}^n \big\}.
	\end{align}
	This construction ensures the nesting property
	\begin{align}
		\Lambda_{L} \subset \Lambda_{L-1} \subset \cdots \subset \Lambda_{1},
	\end{align}
	where $\Lambda_{1}$ is the finest and $\Lambda_{L}$ is the coarsest lattice in the chain.
	
	For each layer $\ell$, the quantizer associated with $\Lambda_{\ell}$ is defined as
	\begin{align}
		Q_{\Lambda_{\ell}}(\mathbf{x}) 
		\triangleq 
		\arg\min_{\boldsymbol{\lambda} \in \Lambda_{\ell}}
		\|\mathbf{x} - \boldsymbol{\lambda}\|,
	\end{align}
	which maps any $\mathbf{x}\in\mathbb{R}^n$ to its closest lattice point in $\Lambda_{\ell}$.  
	The set of points quantized to the origin forms the \textit{fundamental Voronoi region} of $\Lambda_{\ell}$:
	\begin{align}
		\mathcal{V}_{\ell} 
		\triangleq 
		\big\{ \mathbf{x} \in \mathbb{R}^n : 
		Q_{\Lambda_{\ell}}(\mathbf{x}) = \mathbf{0} \big\}.
	\end{align}
	These regions tile $\mathbb{R}^n$ under lattice shifts.
	
	Since each lattice is obtained by scaling its predecessor, their Voronoi regions are 
	geometrically similar and satisfy
	\begin{align}
		\mathcal{V}_{\ell+1} = \rho \mathcal{V}_{\ell}.
	\end{align}
	Consequently, their volumes obey
	\begin{align}
		\mathrm{vol}(\mathcal{V}_{\ell+1})
		= \rho^{n} \, \mathrm{vol}(\mathcal{V}_{\ell}).
	\end{align}
	Thus, the quantizers $Q_{\Lambda_{\ell}}(\cdot)$ provide increasingly finer approximations as $\ell$ decreases, 
	yielding a natural multi-resolution representation across the $L$ nested layers.
	
	The performance of each individual quantizer is characterized by the normalized second moment of its Voronoi region:
	\begin{align}
		\sigma^{2}(\Lambda_{\ell}) 
		= \frac{1}{n \, \mathrm{vol}(\mathcal{V}_{\ell})}
		\int_{\mathcal{V}_{\ell}} \| \mathbf{v} \|^{2} \, d\mathbf{v}.
	\end{align}
	The per-layer second moment quantifies the distortion introduced by the $\ell$-th quantizer and directly 
	determines the resolution contributed by that layer in a hierarchical decomposition. Accordingly, the finest lattice $\Lambda_{1}$ sets the target minimum achievable computation resolution through $\sigma^{2}(\Lambda_{1})$, which is a controllable design parameter of the system.

	A representative realization of such a nested lattice structure is illustrated in Fig.~\ref{fig:hexagonal_layers}.\footnote{The theoretical framework developed in this paper applies to general lattice codes and does not rely on any specific lattice construction. The design or optimization of particular lattice families is beyond the scope of this work.}

The nested lattice structure used in our framework bears only a superficial resemblance to hierarchical QAM constellations studied in conventional digital communications \cite{aluini}. In the context of AirComp, an information-theoretic coded-modulation approach based on hierarchical PAM constellations has also been explored in \cite{saeed}. However, beyond the abstract notion of hierarchy, the two approaches are fundamentally different. The work in \cite{saeed} focuses on computing multiple symbol-wise outputs of a discrete function, with the objective of maximizing the number of decoded outputs. In contrast, our work develops a physical-layer architecture for computing a single continuous-valued target function, aiming to achieve arbitrary and controllable computation resolution.
	
	\begin{figure}[!t]
		\centering
		\begin{tikzpicture}[line cap=round,line join=round,scale=0.20]
			
			\def\s{0.8}                           
			\pgfmathsetmacro{\rt}{sqrt(3)}
			
			\newcommand{\XY}[2]{%
				\pgfmathsetmacro{\X}{1.5*(#1)*\s}%
				\pgfmathsetmacro{\Y}{\rt*((#2)+(#1)/2)*\s}%
			}
			\newcommand{\Hex}[4][]{%
				\begin{scope}[shift={({#2},{#3})}]
					\draw[#1] (0:#4)--(60:#4)--(120:#4)--(180:#4)--(240:#4)--(300:#4)--cycle;
				\end{scope}
			}
			
			\def\lwF{0.45pt}   
			\def\lwM{1.4pt}    
			\def\lwC{2.6pt}    
			
			\foreach \q in {-8,...,8}{
				\foreach \r in {-8,...,8}{
					\XY{\q}{\r}
					\Hex[black,line width=\lwF]{\X}{\Y}{\s}
					\fill[black] (\X,\Y) circle[radius=0.18];
				}
			}
			
			\foreach \q in {-9,-6,-3,0,3,6,9}{
				\foreach \r in {-9,-6,-3,0,3,6,9}{
					\XY{\q}{\r}
					\Hex[blue!70!black,line width=\lwM]{\X}{\Y}{3*\s}
					\fill[blue!70] (\X,\Y) circle[radius=0.30];
				}
			}
			
			\foreach \q in {-9,0,9}{
				\foreach \r in {-9,0,9}{
					\XY{\q}{\r}
					\Hex[orange!80!black,line width=\lwC]{\X}{\Y}{9*\s}
					\fill[orange!85!black] (\X,\Y) circle[radius=0.42];
				}
			}
		\end{tikzpicture}
		\caption{Two-dimensional hexagonal lattice with three nested layers: fine (black), intermediate (blue), and coarse (orange).}
		\label{fig:hexagonal_layers}
				\vspace{-8pt}
	\end{figure}


	\vspace{-6pt}
	\section{Transmission Scheme for Out-of-Air Computation}
	Assume that there are \(K\) devices and a fusion center. The fusion center aims to compute the sum function \(\mathbf{u} = \sum_{k=1}^{K} \mathbf{u}_k\), where \(\mathbf{u}_k\) denotes the continuous-valued local input of device \(k\).\footnote{In this work, we focus on the sum function, consistent with the majority of AirComp studies \cite{kaibin_aircomp, kaibin_airfl,  ding_airfl,  gunduz_airfl,  deniz_mis, kaibin_aircomp2,     azimi_wafel, deniz_inference, huang_neural,  gunduz_blind, deniz_onebit,    saeed_sumcomp, saeed_blind, deniz_digital, brinton_digital, kaibin_digital, huang_digital, zey_digital, saeed_optimal}. This sum constitutes a fundamental operation, and a broader class of nomographic functions can be computed by applying appropriate pre- and post-processing around it \cite{airflmag, aircompmag, aircompsurvey}.} To enable this computation, AirCPU operates as follows.
	
	Each device \(k\) first normalizes its input data \(\mathbf{u}_k\) using a mapping function \(\phi(\cdot)\),\footnote{The mapping $\phi(\cdot)$ is invertible and Lipschitz continuous, ensuring that reconstruction errors in the transformed domain translate proportionally to the original domain.} resulting in
	\begin{align}
		\mathbf{v}_k = \phi(\mathbf{u}_k).
	\end{align}
Then, a dither vector $\mathbf{d}_k$, uniformly drawn from the fundamental Voronoi region $\mathcal{V}_1$, is independently generated across devices and known at the fusion center through shared randomness. The dithered signal is
	\begin{align}
		\mathbf{v}_k + \mathbf{d}_k.
	\end{align}
	The resulting vector is then quantized to its nearest finest lattice point as
	\begin{align}
		\mathbf{x}_k &= Q_{\Lambda_1}\big(\mathbf{v}_k + \mathbf{d}_k\big).
	\end{align}
	The addition of dither results in the quantization error becoming
	uniform, allowing it to be statistically characterized by the second moment of the lattice. 
	
	\vspace{-5pt}
	\subsection{Layered Transmission of Nested Lattice Components}
The quantized lattice point $\mathbf{x}_k \in \Lambda_1$ belongs to an infinite constellation spanning the entire Euclidean space, making direct transmission over a power-limited channel challenging. Such an infinite constellation, with a small minimum distance between adjacent points to ensure the desired computation resolution, requires high-resolution digital representation. Moreover, the densely packed structure of $\Lambda_1$ makes the transmitted points highly sensitive to channel noise, leading to unreliable decoding performance, similar to the limitations observed in analog AirComp. 
	
	To enable practical transmission under a fixed power constraint and to ensure robustness against channel noise---thereby providing reliability and error-control properties---we propose the following transmission framework.
	
To obtain a finite layered representation, we assume that the quantized lattice point \(\mathbf{x}_k\) lies within the Voronoi region of an outer lattice \(\Lambda_{L+1}\), i.e.,
\begin{align}
	Q_{\Lambda_{L+1}}(\mathbf{x}_k)=\mathbf{0},
\end{align}
or equivalently \(\mathbf{x}_k \in \mathcal{V}_{L+1}\), where \(\Lambda_{L+1} = \rho \Lambda_L\). This can be ensured by appropriately selecting the input normalization and the number of layers so that the source values of interest remain within this region.

Under this assumption, any lattice point \(\mathbf{x}_k \in \Lambda_{1}\cap \mathcal{V}_{L+1}\) can be decomposed into $L$ layers as
\begin{align}
	\mathbf{x}_k = \sum_{\ell=1}^{L} \mathbf{x}_k^{\ell},
\end{align}
where the $\ell$-th component is defined as
\begin{align}
	\mathbf{x}_k^{\ell}
	= Q_{\Lambda_{\ell}}(\mathbf{x}_k) - Q_{\Lambda_{\ell+1}}(\mathbf{x}_k),
	\qquad \ell = 1,\ldots, L.
\end{align}
The decomposition is exact because the terms telescope and \(Q_{\Lambda_{L+1}}(\mathbf{x}_k)=\mathbf{0}\).

For each layer \(\ell\), the values of \(\mathbf{x}_k^\ell\) lie in the set
\begin{align}
	\mathcal{X}_\ell
	\triangleq
	\Lambda_\ell \cap \mathcal{V}_{\ell+1},
	\qquad \ell=1,\ldots,L,
\end{align}
where a deterministic boundary-assignment rule (described below) ensures uniqueness for points on the Voronoi boundary. This follows from the identity \(\mathbf{x}_k^\ell = Q_{\Lambda_\ell}(\mathbf{x}_k) \bmod \Lambda_{\ell+1}\), which ensures that \(\mathbf{x}_k^\ell \in \mathcal{V}_{\ell+1}\) while remaining a point in \(\Lambda_\ell\). Thus, the symbol at layer \(\ell\) belongs to the bounded constellation \(\mathcal{X}_\ell\).

Because the underlying lattices form a scaled nested chain, all sets \(\mathcal{X}_\ell\), for \(\ell = 1,\ldots,L\), share the same geometric structure up to scaling.

		\vspace{-5pt}
	
	\subsection{Boundary Handling and Composite Constellation}
	Lattice points located on the boundary of the Voronoi region $\mathcal{V}_{\ell+1}$ are shared by several adjacent cells.  
	To ensure a unique and consistent mapping at both the encoder and decoder, a deterministic boundary-assignment rule is applied so that each shared point is assigned to exactly one Voronoi region.  
	After this assignment, every layer $\ell$ has a well-defined and identical \emph{pattern} of possible constellation points.
	Thus, each layer transmits one symbol from the same normalized, bounded constellation shape; what differs across layers is only the scaling needed to satisfy the transmit-power constraint.
	
	To align this common constellation with the power constraint $P$, we begin by identifying the maximum radius of the Voronoi region of the second-layer lattice $\Lambda_2$:
	\begin{align}
		R_2 \triangleq \max_{\mathbf{x} \in \Lambda_1 \cap \mathcal{V}_2} \|\mathbf{x}\|.
	\end{align}
	We then choose a global normalization factor $\alpha$ such that the scaled region of $\Lambda_2$ fits exactly within the power constraint ball of radius $\sqrt{nP}$:
	\begin{align}
		\alpha R_2 = \sqrt{nP}
		\quad\Longrightarrow\quad
		\alpha = \frac{\sqrt{nP}}{R_2}.
	\end{align}
	This normalization ensures that, under the maximum allowable transmit power, the constellation attains the largest possible minimum distance between adjacent points.

	Since the nested lattices satisfy
	\begin{align}
		\Lambda_\ell = \rho^{\ell-1}\Lambda_1,
	\end{align}
	the appropriate scaling for the constellation at layer $\ell$ is given by the global factor $\alpha$ divided by the expansion factor $\rho^{\ell-1}$.
	Accordingly, the encoded transmitted symbol at layer $\ell$ is expressed as
	\begin{align}
		\label{denormalize}
		\mathbf{u}_k^\ell 
		= \frac{\alpha}{\rho^{\ell-1}}\, \mathbf{x}_k^\ell.
	\end{align}
	
	With this choice of scaling,
	\begin{align}
		\|\mathbf{u}_k^\ell\|^2 \le nP,
	\end{align}
	for all layers $\ell$, ensuring that every transmitted point lies within the permissible power region.  Importantly, the scaling is fixed and independent of the channel realization, reflecting a blind transmission architecture without CSIT.
	As a result, all layers use exactly the same normalized constellation shape, while their amplitudes differ only through the layer-dependent factor $1/\rho^{\ell-1}$ imposed by the nested lattice hierarchy. It is also worth noting that lattice-based constellations can, in general, be implemented using standard digital modulation techniques \cite{nazer, nazermimo, azimi_cf1, fedcpu}.

	A representative constellation is shown in Fig.~2.
	
	
	\begin{figure}[t]
		\centering
		\begin{tikzpicture}[line cap=round,line join=round,scale=0.6]
			
			\def\s{0.8}
			\pgfmathsetmacro{\rt}{sqrt(3)}
			
			\newcommand{\XY}[2]{%
				\pgfmathsetmacro{\X}{1.5*(#1)*\s}%
				\pgfmathsetmacro{\Y}{\rt*((#2)+(#1)/2)*\s}%
			}
			
			\newcommand{\Hex}[4][]{%
				\begin{scope}[shift={({#2},{#3})}]
					\draw[dashed,#1] (0:#4)--(60:#4)--(120:#4)--(180:#4)--(240:#4)--(300:#4)--cycle;
				\end{scope}
			}
			
			\def\lwF{0.45pt}
			\def\lwM{1.4pt}
			\def\dotR{0.18}
			
			\newcommand{\CircleDotAt}[2]{%
				\fill[black] (#1,#2) circle[radius=\dotR];
			}
			
			\newcommand{\SmallHexDot}[2]{%
				\XY{#1}{#2}%
				\Hex[black,line width=\lwF]{\X}{\Y}{\s}%
				\CircleDotAt{\X}{\Y}%
			}
			
			\Hex[blue!70!black,line width=\lwM]{0}{0}{3*\s}
			
			\begin{scope}
				\clip (0:3*\s)--(60:3*\s)--(120:3*\s)--(180:3*\s)--(240:3*\s)--(300:3*\s)--cycle;
				
				\foreach \q in {-8,...,8}{
					\foreach \r in {-8,...,8}{
						\XY{\q}{\r}
						\Hex[black,line width=\lwF]{\X}{\Y}{\s}
						
						\pgfmathtruncatemacro{\skipdot}{%
							(\q==-1 && \r== 2) ||%
							(\q== 1 && \r== 1) ||%
							(\q==-1 && \r==-1) ||%
							(\q== 1 && \r==-2) ? 1 : 0}
						\ifnum\skipdot=0
						\CircleDotAt{\X}{\Y}
						\fi
					}
				}
			\end{scope}
			
			\SmallHexDot{-2}{1}
			\SmallHexDot{ 2}{-1}
			
			\Hex[blue!70!black,line width=\lwM]{0}{0}{3*\s}
			
			\draw[dashed,green!80!black,line width=1pt] (0,0) circle[radius=3*\s];
			
		\end{tikzpicture}
		
		\caption{Representative hexagonal-lattice constellation used in each transmission layer. Black points denote the constellation elements, and the green circle indicates the power boundary.}
				\vspace{-5pt}
	\end{figure}

			\vspace{-7pt}
	
	\subsection{Joint Source--Channel Functionality}
	This layer-wise nested mapping naturally achieves a \emph{joint source--channel coding} behavior. 
	The quantization step over $\Lambda_1$ serves as the source-coding operation, 
	compressing the continuous-valued vector into a discrete lattice representation, 
	while the sequential lattice transmission acts as a channel-coding mechanism, 
	spreading the information across multiple layers of finite-power constellations. 
	At the receiver, the decoding proceeds layer by layer---starting from the coarsest lattice $\Lambda_L$ 
	and refining upward---allowing progressive reconstruction of the quantized lattice point $\mathbf{x}_k$ 
	and, consequently, the original normalized vector $\mathbf{v}_k$ with increasing resolution.
	
	This joint design bridges source and channel coding via nested lattices, enabling efficient end-to-end mapping from the source domain to the channel input 
	with minimal distortion under a fixed power constraint. As the constellation is scaled with the transmission power, the scaled Voronoi regions can confine the channel noise, and can lead to sufficiently small decoding error probability, thereby enabling \textit{reliable} operation.

	\section{Out-of-Air Computation over Gaussian MAC}
	
	We now present the computation process of AirCPU  over a Gaussian MAC.  
	At each lattice layer~$\ell$, the $K$ transmitting devices simultaneously send their encoded lattice symbols $\mathbf{u}_k^{\ell},\forall k$.  
	The received signal at the fusion center can therefore be written as
	\begin{align}
		\mathbf{y}^{\ell}
		= \sum_{k=1}^{K} \mathbf{u}_{k}^{\ell}
		+ \mathbf{n}^{\ell},
	\end{align}
	where $\mathbf{n}^{\ell}\!\sim\!\mathcal{N}(\mathbf{0},\sigma_n^2\mathbf{I})$ denotes the additive white Gaussian noise (AWGN) vector associated with layer~$\ell$.  This model is equivalent under ideal conditions in which perfect channel inversion or asymptotically large antenna arrays neutralize wireless channel effects. For reference, most existing digital AirComp schemes are formulated only under Gaussian MAC models of this form \cite{deniz_onebit, saeed_channelcomp, saeed_sumcomp, saeed_blind, kaibin_digital, huang_digital, zey_digital, saeed_optimal, saeed_vector, saeed}.

	\noindent\textbf{Lattice Decoding:}
	The fusion center applies a scaled lattice quantization to the received signal by mapping it onto the scaled lattice $\alpha\Lambda_1$, and subsequently denormalizes it according to \eqref{denormalize} as
	\begin{align}
		\label{decoding}
		\mathbf{s}^{\ell}
		= \frac{{\rho}^{\ell-1}}{\alpha} Q_{\alpha \Lambda_1}\left(
		{\mathbf{y}^{\ell}}
		\right),
	\end{align}
	which effectively recovers an estimate of the noiseless target function
	$\sum_{k=1}^{K}\mathbf{x}_k^{\ell}$. We quantize onto $\alpha\Lambda_1$ since $\sum_k \mathbf{u}_k^\ell \in \alpha\Lambda_1$ by \eqref{denormalize} and $\Lambda_\ell=\rho^{\ell-1}\Lambda_1$.
	The decoding operation in \eqref{decoding} can be interpreted as lattice decoding under an effective additive noise. In the Gaussian MAC case, this corresponds directly to $\mathbf{n}^\ell$, whereas in more general settings, it corresponds to an equivalent noise term whose variance is captured by the effective noise variance. 
	
	This decoding operation is considered reliable when the quantization can correctly identify the intended lattice point, as defined below.
	
\begin{definition}
	In this paper, \emph{reliable decoding} refers to an operating regime in which the probability of lattice decoding error satisfies
	\begin{align}
		P_e \triangleq \mathbb{P}\!\left(
		Q_{\alpha \Lambda_1}(\mathbf{y}^\ell) \neq \sum_{k=1}^K \mathbf{u}_k^\ell
		\right)
		\le \epsilon,
	\end{align}
	for a sufficiently small $\epsilon > 0$. 
	
	A precise non-asymptotic characterization of $P_e$, which depends on the lattice geometry, dimension, and effective noise variance, is generally difficult to obtain in closed form. Therefore, in this work, we characterize the reliable operating regime using a tractable geometric approximation rather than exact error probability expressions.
\end{definition}

In particular, reliable decoding requires that the effective noise remains within the Voronoi region of the decoding lattice with high probability. Let $r_{\mathrm{in}}$ denote the radius of the largest Euclidean ball inscribed in the Voronoi region of the decoding lattice, referred to as the inradius. Since an exact characterization of this event is intractable for general lattices, we adopt a conservative geometric condition based on the inscribed ball, as illustrated in Fig.~3, and require
\begin{align}
	\mathbb{P}\!\left(\|\mathbf{n}^\ell\| \le r_{\mathrm{in}}\right) \approx 1.
\end{align}

Using concentration properties of Gaussian vectors in $\mathbb{R}^n$, this condition can be approximated as
\begin{align}
	\sigma_n \sqrt{n} \ll r_{\mathrm{in}}.
\end{align}
For the decoding lattice $\alpha\Lambda_1$, the inradius scales linearly with the lattice scaling factor. Moreover, since $\Lambda_2=\rho\Lambda_1$, the characteristic radius of $\Lambda_1$ is smaller than that of $\Lambda_2$ by a factor $1/\rho$. Combined with $\alpha R_2=\sqrt{nP}$, this implies that the inradius of the decoding lattice scales approximately as
\begin{align}
	r_{\mathrm{in}} \propto \frac{\alpha R_2}{\rho}
	= \frac{\sqrt{nP}}{\rho}.
\end{align}
Based on this scaling, we express the reliable decoding requirement in terms of a design-dependent condition of the form
\begin{align}
	\label{reliable}
	\sigma_n \le c_{\mathrm{g}}  \frac{\sqrt{P}}{\rho},
\end{align}
where $c_{\mathrm{g}} >0$ is a design-dependent constant that captures the lattice geometry and the desired reliability level.

	\begin{figure}[h]
		\centering
		\begin{tikzpicture}[line cap=round,line join=round,scale=0.7]
			\def\s{0.8}
			\pgfmathsetmacro{\rt}{sqrt(3)/2}
			
			\draw[black,line width=0.8pt]
			(0:\s)--(60:\s)--(120:\s)--(180:\s)--(240:\s)--(300:\s)--cycle;
			
			\fill[red!70!black!40!white] (0,0) circle[radius=\s*\rt];
			
			\fill[black] (0,0) circle[radius=0.18];
		\end{tikzpicture}
		\caption{Hexagonal Voronoi region representation. The red inner circle illustrates the effective noise ball confined within the Voronoi region.}
	\end{figure}
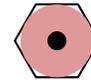

\begin{remark}
	The term \emph{reliable decoding} is used as a regime descriptor in which the intended lattice point is recovered with high probability. When this condition is not satisfied, decoding errors occur with non-negligible probability, which is not characterized analytically in this work. Nevertheless, the lattice quantizer in \eqref{decoding} does not produce arbitrary outputs; rather, it typically maps the received point to a \emph{nearby} lattice point with high probability, while larger deviations occur with small probability due to noise tails. Consequently, decoding errors induce a \emph{graded} computation distortion governed by the Euclidean distance between the decoded and target lattice points, rather than causing completely incorrect function values as in conventional digital communications. As observed in Section VII, AirCPU maintains strong MSE performance even at SNRs below the threshold required for strict reliability, indicating that the scheme does not fundamentally rely on this condition.
\end{remark}
	
	\noindent\textbf{Layer Aggregation:}
	Since the overall code is constructed hierarchically across $L$ layers, the complete recovered vector is obtained by summing the decoded lattice components:
	\begin{align}
		\mathbf{s}
		= \sum_{\ell=1}^{L} \mathbf{s}^{\ell}.
	\end{align}
	This layer-wise reconstruction yields what we term \emph{progressive resolution}: lower lattice layers provide coarse aggregate estimates, while higher layers refine the function estimate with finer components, so that the computation accuracy improves monotonically without changing the transmit power or constellation structure; equivalently, this refinement can be viewed as a receiver-side \emph{zooming} process over progressively smaller lattice Voronoi regions.
	
	Finally, after subtracting the dithers and denormalization, the desired function is obtained as
	\begin{align}
		\label{final_stage}
		\hat {\mathbf{u}} = \phi^{-1} \left(\mathbf{s} - \sum_{k=1}^{K} \mathbf{d}_k\right).
	\end{align}

The MSE of the computation distortion can be expressed as
\begin{align}
	\mathrm{MSE}
	&= \mathbb{E}\!\left[\|\hat{\mathbf{u}}-\mathbf{u}\|^{2}\right] \nonumber \\
	&= (1-P_e)\, nK \sigma^{2}(\Lambda_{1}) 
	+ P_e \, \mathbb{E}\!\left[\|\mathbf{e}_{\mathrm{dec}}\|^2\right] \nonumber \\
	&\approx nK \sigma^{2}(\Lambda_{1}),
\end{align}
where $\mathbf{e}_{\mathrm{dec}}$ denotes the error induced by incorrect lattice decoding. The factor $K$ reflects the aggregation of $K$ independent lattice quantization errors across devices. The approximation holds when the decoding error probability $P_e$ is sufficiently small, in which case the overall distortion is dominated by the lattice quantization error. 
	
	We refer to this property as \emph{decoupled resolution}, whereby the channel and constellation determine only whether the target resolution is recovered, but not the distortion level within the reliable regime.
	
	\section{Out-of-Air Computation over Fading MAC}
	We now present the computation process of AirCPU over a fading MAC, operating directly on the raw, uncompensated, channel-impaired superposition. Consider a set of $K$ single-antenna devices transmitting to an $M$-antenna receiving fusion center during the $\ell$-th communication round, where $M$ is an arbitrary positive integer. The received signal at the $m$-th antenna is given by
	\begin{align}
		\mathbf{y}_m^{\ell}
		= \sum_{k=1}^{K}  h_{k,m}^{\ell}\, \mathbf{u}_{k}^{\ell}
		+ \mathbf{n}_{m}^{\ell},
	\end{align}
	where $h_{k,m}^{\ell}$ is the complex channel coefficient between device $k$ and antenna $m$ and $\mathbf{n}_{m}^{\ell}$ represents the complex AWGN noise.
	
	Stacking the received signals over all antennas yields the compact complex-valued matrix form
	\begin{align}
		\mathbf{Y}_{\text{c}}^{\ell}
		= \mathbf{H}_{\text{c}}^{\ell} \mathbf{U}^{\ell}
		+ \mathbf{N}_{\text{c}}^{\ell},
	\end{align}
	where
	\begin{align}
		\mathbf{H}_{\text{c}}^{\ell} &=
		\begin{bmatrix}
			h_{1,1}^{\ell} & \cdots & h_{K,1}^{\ell} \\
			\vdots & \ddots & \vdots \\
			h_{1,M}^{\ell} & \cdots & h_{K,M}^{\ell}
		\end{bmatrix}, \\[4pt]
		\mathbf{Y}_{\text{c}}^{\ell} &=
		\begin{bmatrix}
			\mathbf{y}_{1}^{\ell \ \top} \\ \vdots \\ \mathbf{y}_{M}^{\ell \ \top}
		\end{bmatrix}, \quad
		\mathbf{N}_{\text{c}}^{\ell} =
		\begin{bmatrix}
			\mathbf{n}_{1}^{\ell \ \top} \\ \vdots \\ \mathbf{n}_{M}^{\ell \ \top}
		\end{bmatrix}, \quad
		\mathbf{U}^{\ell} =
		\begin{bmatrix}
			\mathbf{u}_{1}^{\ell \ \top} \\ \vdots \\ \mathbf{u}_{K}^{\ell \ \top}
		\end{bmatrix}.
	\end{align}
	
	For convenience, we can represent the above model in its equivalent real-valued form:
	\begin{align}
		\mathbf{Y}^{\ell}
		=  \mathbf{H}^{\ell} \mathbf{U}^{\ell}
		+ \mathbf{N}^{\ell},
	\end{align}
	where
	\begin{align}
		\mathbf{H}^{\ell} &=
		\begin{bmatrix}
			\mathfrak{Re}\{\mathbf{H}_{\text{c}}^{\ell}\} \\
			\mathfrak{Im}\{\mathbf{H}_{\text{c}}^{\ell}\}
		\end{bmatrix}, \quad
		\mathbf{Y}^{\ell} =
		\begin{bmatrix}
			\mathfrak{Re}\{\mathbf{Y}_{\text{c}}^{\ell}\} \\
			\mathfrak{Im}\{\mathbf{Y}_{\text{c}}^{\ell}\}
		\end{bmatrix}, \quad
		\mathbf{N}^{\ell} =
		\begin{bmatrix}
			\mathfrak{Re}\{\mathbf{N}_{\text{c}}^{\ell}\} \\
			\mathfrak{Im}\{\mathbf{N}_{\text{c}}^{\ell}\}
		\end{bmatrix}.
	\end{align}
	\noindent\textbf{Lattice Decoding:}
	Due to the closure property of lattices, under which any integer linear combination of lattice points remains a lattice point, the fusion center can target different integer-coefficient functions of the form $\mathbf{a}^\top \mathbf{U}^{\ell}$ characterized by the integer coefficient vector $\mathbf{a} = [a_1, \ldots, a_K]^\top$. To this end, the fusion center first equalizes the received signals using a vector $\mathbf{b} \in \mathbb{R}^{2M \times 1}$ through the operation $\mathbf{b}^\top \mathbf{Y}^{\ell}$, which can be expressed as
	\begin{align}
		\label{equal}
		\mathbf{b}^\top \mathbf{Y}^{\ell}
		= \mathbf{a}^\top \mathbf{U}^{\ell}
		+ \left(\mathbf{b}^\top \mathbf{H}^{\ell} - \mathbf{a}^\top\right)\mathbf{U}^{\ell}
		+ \mathbf{b}^\top \mathbf{N}^{\ell}.
	\end{align}
	After denormalization, the integer-coefficient function $\sum_{k=1}^{K} a_k \mathbf{x}_k^{\ell}$ is recovered as  
	\begin{align}
		\mathbf{s}^{\ell}
		= \frac{{\rho}^{\,\ell-1}}{\alpha}
		Q_{\alpha \Lambda_1}\!\left(
		\mathbf{b}^\top \mathbf{Y}^{\ell}
		\right),
	\end{align}
	and the effective noise variance associated with this decoding, obtained from \eqref{equal}, is  
	\begin{align}
		\label{effective}
		\sigma_{\mathrm{e}}^{2}(\mathbf{a}, \mathbf{b})
		= P \left\| \mathbf{b}^\top \mathbf{H}^{\ell} - \mathbf{a}^\top \right\|^{2}
		+ \sigma_n^{2} \|\mathbf{b}\|^{2}.
	\end{align}
The effective noise variance in \eqref{effective} corresponds to the variance of the equivalent Gaussian noise after equalization in \eqref{equal}. Under lattice decoding, the decoding error probability is governed by this effective noise relative to the lattice Voronoi region. Therefore, minimizing $\sigma_{\mathrm{e}}^{2}(\mathbf{a}, \mathbf{b})$ directly improves the reliability of decoding the integer-coefficient function. The following theorem provides the optimal vector $\mathbf{b}$ that minimizes it.
	\begin{theorem}
		The optimal equalization vector for a given coefficient vector $\mathbf{a}$ is
		\begin{align}
			\label{bopt}
			\mathbf{b}_\text{opt}^\top = \mathbf{a}^\top {\mathbf{H}^\ell}^\top \left(\frac{1}{\text{SNR}}\mathbf{I}+{\mathbf{H}^\ell}{\mathbf{H}^\ell}^\top\right)^{-1},
		\end{align}
		where $\text{SNR} = P/\sigma_n^2$.
	\end{theorem}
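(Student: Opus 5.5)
The plan is to treat \eqref{effective} as a strictly convex quadratic in $\mathbf{b}$ for fixed $\mathbf{a}$, find its unique stationary point, and then rewrite that point in the stated form \eqref{bopt}.

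\textbf{Step 1: expand the objective.} Write $\mathbf{H}=\mathbf{H}^\ell$ for brevity. Then
\begin{align}
	\sigma_{\mathrm{e}}^{2}(\mathbf{a},\mathbf{b}) = P\left(\mathbf{b}^\top\mathbf{H}\mathbf{H}^\top\mathbf{b} - 2\mathbf{b}^\top\mathbf{H}\mathbf{a} + \|\mathbf{a}\|^2\right) + \sigma_n^2\,\mathbf{b}^\top\mathbf{b},
\end{align}
so the Hessian is $2\left(P\mathbf{H}\mathbf{H}^\top+\sigma_n^2\mathbf{I}\right)$. Since $\sigma_n^2>0$, this matrix is positive definite. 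The objective is therefore strictly convex and has a unique global minimizer, characterized by the first-order condition.

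\textbf{Step 2: solve the first-order condition.} Setting the gradient with respect to $\mathbf{b}$ to zero gives
\begin{align}
	\left(P\mathbf{H}\mathbf{H}^\top+\sigma_n^2\mathbf{I}\right)\mathbf{b} = P\mathbf{H}\mathbf{a}.
\end{align}
Dividing by $P$ and writing $\text{SNR}=P/\sigma_n^2$ yields $\mathbf{b}_{\text{opt}}=\left(\frac{1}{\text{SNR}}\mathbf{I}+\mathbf{H}\mathbf{H}^\top\right)^{-1}\mathbf{H}\mathbf{a}$. Transposing and using the symmetry of the inverse gives exactly \eqref{bopt}.

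\textbf{Step 3: check dimensions.} Here $\mathbf{H}$ is $2M\times K$, so the matrix $\frac{1}{\text{SNR}}\mathbf{I}+\mathbf{H}\mathbf{H}^\top$ is $2M\times 2M$ and $\mathbf{b}$ is $2M\times 1$, as required. The inverse exists for any $M$ and any channel realization because of the $\frac{1}{\text{SNR}}\mathbf{I}$ term, even when $2M<K$ or $\mathbf{H}$ is rank-deficient.

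\textbf{Main obstacle.} There is no real mathematical obstacle. The only delicate point is modeling: \eqref{effective} treats the entries of $\mathbf{U}^\ell$ as uncorrelated with per-entry power $P$. I will take \eqref{effective} as given, as the paper does, rather than rederive it from the bounded constellation, where $P$ is only an upper bound on the power. If desired, the minimum value can be stated as a byproduct by substituting $\mathbf{b}_{\text{opt}}$ and applying the matrix inversion lemma:
\begin{align}
	\sigma_{\mathrm{e}}^2 = P\,\mathbf{a}^\top\left(\mathbf{I}+\text{SNR}\,\mathbf{H}^\top\mathbf{H}\right)^{-1}\mathbf{a}.
\end{align}
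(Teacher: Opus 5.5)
Your proposal is correct and follows the same route as the paper's proof: expand the effective noise variance as a quadratic in $\mathbf{b}$, set the gradient $2P\mathbf{H}^\ell{\mathbf{H}^\ell}^\top\mathbf{b}-2P\mathbf{H}^\ell\mathbf{a}+2\sigma_n^2\mathbf{b}$ to zero, and solve. You also note that the Hessian $2(P\mathbf{H}^\ell{\mathbf{H}^\ell}^\top+\sigma_n^2\mathbf{I})$ is positive definite, which certifies that this stationary point is the unique global minimizer (the paper leaves this implicit), and your byproduct minimum value correctly keeps the factor $P$ that the paper drops when it substitutes $\mathbf{b}_{\text{opt}}$ back.
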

	\begin{IEEEproof}
		Expanding $\sigma_\text{e}^2(\mathbf{a}, \mathbf{b})$ as
		\begin{align}
			&P\|\mathbf{b}^\top{\mathbf{H}^\ell}-\mathbf{a}^\top\|^2+\sigma_n^2\|\mathbf{b}\|^2 = P\left(\mathbf{b}^\top{\mathbf{H}^\ell}-\mathbf{a}^\top\right)\left({\mathbf{H}^\ell}^\top\mathbf{b}-\mathbf{a}\right)\nonumber\\&+\sigma_n^2\|\mathbf{b}\|^2 = P\mathbf{b}^\top{\mathbf{H}^\ell}{\mathbf{H}^\ell}^\top\mathbf{b} -2P\mathbf{b}^\top{\mathbf{H}^\ell}\mathbf{a}+P\mathbf{a}^\top\mathbf{a}+\sigma_n^2\mathbf{b}^\top\mathbf{b},
		\end{align}
		and taking the derivative with respect to $\mathbf{b}$, we obtain
		\begin{align}
			2P{\mathbf{H}^\ell}{\mathbf{H}^\ell}^\top \mathbf{b} - 2P {\mathbf{H}^\ell}\mathbf{a}+{2}\sigma_n^2\mathbf{b},
		\end{align}
		which amounts to zero at \eqref{bopt}.
	\end{IEEEproof}
	Substituting $\mathbf{b}_\text{opt}$ into \eqref{effective}, we obtain
	\begin{align}
		\label{dmse_mil}
		&\sigma_\text{e}^2(\mathbf{a}) = \mathbf{a}^\top\left[ \mathbf{I} - {\mathbf{H}^\ell}^\top \left(\frac{1}{\text{SNR}}\mathbf{I}+{\mathbf{H}^\ell}{\mathbf{H}^\ell}^\top\right)^{-1}{\mathbf{H}^\ell}\right]\mathbf{a}.
	\end{align}
	Applying the matrix inversion identity, \eqref{dmse_mil} can be expressed as
	\begin{align}
		\label{dmsee}
		\sigma_\text{e}^2(\mathbf{a}) = \mathbf{a}^\top\left( \mathbf{I} +{\text{SNR}}{\mathbf{H}^\ell}^\top{\mathbf{H}^\ell}\right)^{-1}\mathbf{a},
	\end{align}
	where the matrix $\left( \mathbf{I} +{\text{SNR}}{\mathbf{H}^\ell}^\top\mathbf{H}^\ell\right)^{-1}$ is positive definite. 
	
	Directly decoding the target function $\sum_{k=1}^{K} \mathbf{x}_k^{\ell}$, particularly the case $\mathbf{a} = \mathbf{1}$, yields the following noise variance:
	\begin{align}
		\sigma_{\mathrm{e}}^{2}(\mathbf{1})
		= \mathbf{1}^\top\!\left( \mathbf{I} + \text{SNR}\,{\mathbf{H}^\ell}^\top \mathbf{H}^\ell \right)^{-1}\!\mathbf{1}.
	\end{align}
Thus, a reliable operating regime is expected when the effective noise variance is sufficiently small relative to the decoding lattice, which can be expressed as
\begin{align}
	\sigma_{\mathrm{e}}(\mathbf{1}) \le c_{\mathrm{g}}  \frac{\sqrt{P}}{\rho}.
\end{align}

We refer to this approach, in which the target function is decoded directly as a single integer-coefficient function, as \emph{direct computation}.

		\vspace{-5pt}
	\subsection{Collective Out-of-Air Computation}
	Instead of directly decoding the target function $\sum_{k=1}^{K} \mathbf{x}_k^{\ell}$---corresponding to the coefficient vector $\mathbf{a}=\mathbf{1}$---the fusion center may decode a \emph{set} of integer-coefficient functions and subsequently reconstruct the desired function by combining these function representations through a structured linear system. We refer to this concept as \textit{collective computation}, where multiple function representations, decoded with higher reliability, are used to assist in decoding the target function.
	
	Let $\{\mathbf{a}^{(i)}\}_{i=1}^{I}$ denote $I$ integer coefficient vectors, collected in the matrix
\begin{align}
	\mathbf{A} \triangleq \big[\mathbf{a}^{(1)},\ldots,\mathbf{a}^{(I)}\big] \in \mathbb{Z}^{K\times I},
\end{align}
	where the coefficient vector $\mathbf{a}^{(i)} = [a_1^{(i)}, \ldots,a_K^{(i)}]^\top$ corresponds to the integer-coefficient function $\mathbf{s}^{\ell,(i)} = \sum_{k=1}^{K} a_k^{(i)} \mathbf{x}_k^\ell$. The desired function $\mathbf{s}^\ell = \sum_{k=1}^{K} \mathbf{x}_k^{\ell}$ can be recovered from the decoded function representations if there exists a vector
\begin{align}
	\mathbf{c} = [c_1,\ldots,c_I]^\top \in \mathbb{R}^{I},
\end{align}
	such that
	\begin{align}
		\mathbf{A}\mathbf{c} = \mathbf{1}.
		\label{eq:Ac1_constraint}
	\end{align}
	
	A necessary condition for solvability is that the vector $\mathbf{1}$ lies in the column space of $\mathbf{A}$.
	For reliable recovery, each decoded integer-coefficient function must satisfy
	\begin{align}
		\label{fading_reliable}
	\sigma_{\mathrm{e}}\!\big(\mathbf{a}^{(i)}\big) \;\leq\;c_{\mathrm{g}}  \frac{\sqrt{P}}{\rho}, \qquad \forall i\in[I],
	\end{align}
	which implies that overall reliability is governed by the \emph{largest} effective noise variance among the selected functions. Thus, a natural design metric is
	\[
	\max_{i\in[I]} \sigma_{\mathrm{e}}^{2}\big(\mathbf{a}^{(i)}\big).
	\]
	
	The design of an indirect decoding strategy that improves upon direct decoding with $\mathbf{a}=\mathbf{1}$ can therefore be formulated as the following min-max optimization problem:
	\begin{align}
		\label{eq:indirect_minmax}
		\underset{\mathbf{A}\in\mathbb{Z}^{K\times I}\backslash \mathbf{0},\,\mathbf{c}\in\mathbb{R}^{I}}{\text{min}} 
		\quad & \max_{i\in[I]} \; \sigma_{\mathrm{e}}^{2}\big(\mathbf{a}^{(i)}\big) \\
		\text{subject to} 
		\quad & \mathbf{A}\mathbf{c} = \mathbf{1}. \nonumber
	\end{align}
	
	Problem~\eqref{eq:indirect_minmax} is an integer optimization over different sets of coefficient vectors. It is inherently nonconvex and combinatorial, and therefore NP-hard in general. Developing efficient algorithms for this problem remains an important direction for future work. 
	
	\begin{remark}
This work establishes the theoretical foundations and core principles of AirCPU and demonstrates the achievability of its potential gains, rather than providing a fully application-specific or optimized design. The framework offers a general structure for developing application-dependent schemes under diverse wireless configurations, where specific deployments may require tailored optimizations while relying on the same underlying principles.
	\end{remark}
	
To obtain a tractable approximation of~\eqref{eq:indirect_minmax}, 
we propose a two-group coefficient structure together with a channel-aware grouping strategy. Define the positive definite matrix
\[
\mathbf{Q} \triangleq 
\left( \mathbf{I} + \text{SNR}\,{\mathbf{H}^\ell}^\top \mathbf{H}^\ell \right)^{-1},
\]
such that the effective noise variance can be expressed as $\sigma_{\mathrm{e}}^{2}(\mathbf{a}) = \mathbf{a}^\top \mathbf{Q}\mathbf{a}$. The matrix $\mathbf{Q}$ induces a quadratic form that characterizes how different coefficient vectors amplify noise under the given channel realization. In particular, for the unit vector $\mathbf{e}_k$, we have
\[
\sigma_{\mathrm{e}}^{2}(\mathbf{e}_k) = Q_{kk},
\]
which corresponds to the effective noise variance when only device $k$ is selected. Thus, $Q_{kk}$ provides an indication of the relative reliability of device $k$, where smaller values correspond to more reliable contributions.

\medskip

\noindent\textbf{Step 1 (Channel-aware grouping):}  
Based on this observation, the $K$ devices are partitioned into two groups according to their relative reliability levels:
\begin{align}
	k \in \mathcal{K}_1 \;\;\text{if}\;\; Q_{kk} \geq \tau,
	\quad
	\mathcal{K}_2 = \{1,\ldots,K\}\setminus\mathcal{K}_1,
\end{align}
where $\tau$ is a threshold chosen from the range of $\{Q_{kk}\}_{k=1}^K$.

This grouping separates devices into two sets with relatively larger and smaller effective noise contributions. Since the proposed structure assigns a common coefficient within each group, grouping devices with similar reliability helps limit the mismatch introduced by this constraint.

\medskip

\noindent\textbf{Step 2 (Two-group coefficient structure):}  
All devices within the same group share a common integer coefficient. Thus, for the $i$-th decoded function ($i=1,2$), we define
\begin{align}
a^{(i)}_k =
\begin{cases}
	a^{(i)}_1, & k\in\mathcal{K}_1,\\
	a^{(i)}_2, & k\in\mathcal{K}_2.
\end{cases}
\end{align}
Accordingly, each coefficient vector is parameterized by two integers, and the two decoded functions are specified by
\[
a^{(1)}_1,\, a^{(1)}_2,\quad
a^{(2)}_1,\, a^{(2)}_2.
\]

\medskip

\noindent\textbf{Step 3 (Feasibility and reconstruction):}  
The desired function can be reconstructed if there exist $c_1,c_2 \in \mathbb{R}$ such that
\begin{align}
	c_1\mathbf{a}^{(1)} + c_2\mathbf{a}^{(2)} = \mathbf{1},
\end{align}
which reduces to
\begin{align}
	c_1 a^{(1)}_1 + c_2 a^{(2)}_1 &= 1, \\
	c_1 a^{(1)}_2 + c_2 a^{(2)}_2 &= 1.
\end{align}
A unique solution exists if
\begin{align}
\det\!\begin{bmatrix}
	a^{(1)}_1 & a^{(2)}_1 \\
	a^{(1)}_2 & a^{(2)}_2
\end{bmatrix}
\neq 0.
\end{align}

\medskip

\noindent\textbf{Step 4 (Low-complexity search):}  
The integer coefficients are selected from a bounded set $\mathcal{A} = \{-a_{\max},\ldots,a_{\max}\}$ by solving
\begin{align}
	\label{eq:two_group_minmax} 
\underset{
	a^{(1)}_1,a^{(1)}_2,\,
	a^{(2)}_1,a^{(2)}_2\in{\cal A}
}{\text{min}}
	\max\{\sigma_{\mathrm{e}}^{2}(\mathbf{a}^{(1)}),\,\sigma_{\mathrm{e}}^{2}(\mathbf{a}^{(2)})\}
\end{align}
subject to the determinant constraint.

\medskip

This formulation reduces the search space to only four integer variables, independent of $K$. It provides a structured and low-complexity approximation of~\eqref{eq:indirect_minmax}, without claiming optimality.

	Once a feasible coefficient matrix $\mathbf A=[\mathbf a^{(1)},\ldots,\mathbf a^{(I)}]$ is selected, potentially from \eqref{eq:indirect_minmax}, the fusion center computes $\mathbf b^{(i)}=\mathbf b_{\mathrm{opt}}(\mathbf a^{(i)})$ for each $\mathbf a^{(i)}$ according to~\eqref{bopt}, and decodes the corresponding integer-coefficient function as
	\begin{align}
		\mathbf s^{\ell,(i)}
		=\frac{\rho^{\ell-1}}{\alpha}
		Q_{\alpha\Lambda_1}\!\big((\mathbf b^{(i)})^\top\mathbf Y^\ell\big),
		\qquad i=1,\ldots,I.
	\end{align}
	The target function at layer~$\ell$ is subsequently reconstructed by linearly combining the decoded parts,
	\begin{align}
		\mathbf s^\ell=\sum_{i=1}^{I} c_i\,\mathbf s^{\ell,(i)}.
	\end{align}
	Summing over all layers gives $\mathbf s=\sum_{\ell=1}^{L}\mathbf s^\ell$, and after subtracting the dithers and applying the inverse mapping, the final estimate of the target function is obtained as
	\[
	\hat{\mathbf u}=\phi^{-1}\!\Big(\mathbf s-\sum_{k=1}^{K}\mathbf d_k\Big).
	\]
	
	If all \(I\) decodings associated with the selected coefficient vectors in \(\mathbf{A}\) satisfy the reliable decoding condition in \eqref{fading_reliable}, then all the corresponding functions are recovered with high probability, and the channel introduces no additional distortion beyond rare decoding errors. Consequently, the MSE of the resulting computation can be approximated as
	\begin{align}
		\mathrm{MSE}
		= \mathbb{E}\!\left[\|\hat{\mathbf{u}}-\mathbf{u}\|^{2}\right]
		\approx nK \sigma^{2}(\Lambda_{1}),
	\end{align}
	where the distortion is dominated by lattice quantization.
	
	The complete AirCPU procedure is illustrated in Fig. 4.
	
	\noindent\textbf{Out-of-Air Principle:}
	AirCPU treats the imperfect, channel-imposed received signal as a \emph{rich knowledge source} from which valuable and meaningful computational structures (i.e., function representations) can be mined. 
	From this single observation, the receiver can \emph{extract} multiple reliable and structured functions, each conveying a distinct and complementary piece of information.  
	By freely harvesting and assembling these extracted functions, the fusion center reconstructs the desired target function directly \emph{out of the air}. 

\begin{figure}[t]
	\centering
	\resizebox{.97\columnwidth}{!}{%
			\begin{tikzpicture}[
				font=\small,
				>=Latex,
				node distance=6mm and 8mm,
				blk/.style={draw, rounded corners, align=center, minimum height=7.5mm, minimum width=23mm},
				sblk/.style={draw, rounded corners, align=center, minimum height=6.8mm, minimum width=20mm},
				big/.style={draw, rounded corners, align=center, minimum height=18mm, minimum width=26mm},
				line/.style={-Latex, thick}
				]
				
				\newcommand{\MiniConstellation}{%
					\tikz[line cap=round,line join=round,scale=0.18]{
						\def\s{0.8}
						\pgfmathsetmacro{\rt}{sqrt(3)}
						
						\def\XY##1##2{%
							\pgfmathsetmacro{\X}{1.5*(##1)*\s}%
							\pgfmathsetmacro{\Y}{\rt*((##2)+(##1)/2)*\s}%
						}
						
						\def\Hex##1##2##3##4{%
							\begin{scope}[shift={(##2,##3)}]
								\draw[dashed,##1] (0:##4)--(60:##4)--(120:##4)--(180:##4)--(240:##4)--(300:##4)--cycle;
							\end{scope}
						}
						
						\def\lwF{0.45pt}
						\def\lwM{1.4pt}
						\def\dotR{0.18}
						
						\def\CircleDotAt##1##2{%
							\fill[black] (##1,##2) circle[radius=\dotR];
						}
						
						\def\SmallHexDot##1##2{%
							\XY{##1}{##2}%
							\Hex{black,line width=\lwF}{\X}{\Y}{\s}%
							\CircleDotAt{\X}{\Y}%
						}
						
						\Hex{blue!70!black,line width=\lwM}{0}{0}{3*\s}
						
						\begin{scope}
							\clip (0:3*\s)--(60:3*\s)--(120:3*\s)--(180:3*\s)--(240:3*\s)--(300:3*\s)--cycle;
							
							\foreach \q in {-8,...,8}{
								\foreach \r in {-8,...,8}{
									\XY{\q}{\r}
									\Hex{black,line width=\lwF}{\X}{\Y}{\s}
									
									\pgfmathtruncatemacro{\skipdot}{%
										(\q==-1 && \r== 2) ||%
										(\q== 1 && \r== 1) ||%
										(\q==-1 && \r==-1) ||%
										(\q== 1 && \r==-2) ? 1 : 0}
									\ifnum\skipdot=0
									\CircleDotAt{\X}{\Y}
									\fi
								}
							}
						\end{scope}
						
						\SmallHexDot{-2}{1}
						\SmallHexDot{ 2}{-1}
						
						\Hex{blue!70!black,line width=\lwM}{0}{0}{3*\s}
						
						\draw[dashed,green!80!black,line width=1pt] (0,0) circle[radius=3*\s];
					}%
				}
				
				\node[sblk] (d1) {\textbf{Device $1$}\\
					$\mathbf{u}_1 \xrightarrow{\phi} \mathbf{v}_1$\\
					$+\mathbf{d}_1,\ Q_{\Lambda_1}$\\
					$\times \frac{\alpha}{\rho^{\ell-1}}$};
				
				\node[sblk, below=3mm of d1] (dk) {\textbf{Device $k$}\\
					$\mathbf{u}_k \xrightarrow{\phi} \mathbf{v}_k$\\
					$+\mathbf{d}_k,\ Q_{\Lambda_1}$\\
					$\times \frac{\alpha}{\rho^{\ell-1}}$};
				
				\node[sblk, below=3mm of dk] (dK) {\textbf{Device $K$}\\
					$\mathbf{u}_K \xrightarrow{\phi} \mathbf{v}_K$\\
					$+\mathbf{d}_K,\ Q_{\Lambda_1}$\\
					$\times \frac{\alpha}{\rho^{\ell-1}}$};
				
				\node[draw, rounded corners, inner sep=3mm, fit=(d1)(dK)] (txbox) {};
				
				\node[big, right=22mm of dk] (chan) {\textbf{Fading MAC}\\[1mm]
					$\mathbf{H}^\ell$};
				
				\node[inner sep=0pt] (hex1) at ($(d1.east)+(10mm,0)$) {\MiniConstellation};
				\node[inner sep=0pt] (hexk) at ($(dk.east)+(10mm,0)$) {\MiniConstellation};
				\node[inner sep=0pt] (hexK) at ($(dK.east)+(10mm,0)$) {\MiniConstellation};
				
				\draw[line] (d1.east) -- (hex1.west);
				\draw[line] (dk.east) -- (hexk.west);
				\draw[line] (dK.east) -- (hexK.west);
				
				\draw[line] (hex1.east) -- node[pos=0.25, above=0.6mm] {$\mathbf{u}_1^\ell$} ++(3mm,0) |- (chan.west);
				\draw[line] (hexk.east) -- node[pos=0.25, above=0.6mm] {$\mathbf{u}_k^\ell$} ++(3mm,0) -- (chan.west);
				\draw[line] (hexK.east) -- node[pos=0.25, below=0.6mm] {$\mathbf{u}_K^\ell$} ++(3mm,0) |- (chan.west);
				
				\node[draw, circle, inner sep=1.2pt, right=7mm of chan] (sum) {$\sum$};
				\node[below=6mm of sum] (noise) {$\mathbf{N}_\ell$};
				
				\draw[line] (chan.east) -- (sum.west);
				\draw[line] (noise.north) -- (sum.south);
				
				\node[blk, right=10mm of sum] (choose) {Choose coefficients\\
					$\{\mathbf{a}^{(i)}\}_{i=1}^I$};
				
				\node[blk, below=of choose] (bopt) {Equalization\\
					$\mathbf{b}^{(i)}=\mathbf{b}_{\rm opt}(\mathbf{a}^{(i)})$};
				
				\node[blk, below=of bopt] (q) {Lattice decoding\\
					$\mathbf{s}^{\ell,(i)}=\dfrac{\rho^{\ell-1}}{\alpha}\,
					Q_{\alpha\Lambda_1}\!\big((\mathbf{b}^{(i)})^\top\mathbf{Y}^\ell\big)$};
				
				\node[blk, below=of q] (recon) {Find $\mathbf{c}$: $\mathbf{A}\mathbf{c}=\mathbf{1}$\\
					$\mathbf{s}^\ell=\sum_{i=1}^I c_i\,\mathbf{s}^{\ell,(i)}$};
				
				\node[blk, below=of recon] (agg) {Layer aggregation\\
					$\mathbf{s}=\sum_{\ell=1}^L \mathbf{s}^\ell$};
				
				\node[blk, below=of agg] (final) {De-dither \& inverse map\\
					$\widehat{\mathbf{u}}=\phi^{-1}\!\Big(\mathbf{s}-\sum_{k=1}^K\mathbf{d}_k\Big)$};
				
				\draw[line] (sum.east) -- node[above] {$\mathbf{Y}_\ell$} (choose.west);
				
				\draw[line] (choose) -- (bopt);
				\draw[line] (bopt) -- (q);
				\draw[line] (q) -- (recon);
				\draw[line] (recon) -- (agg);
				\draw[line] (agg) -- (final);
				
				\node[draw, rounded corners, inner sep=5mm,
				fit=(sum)(choose)(bopt)(q)(recon)(agg)(final),
				label={[font=\small]above:{\textbf{Fusion Center}}}] (rxbox) {};
				
			\end{tikzpicture}%
		}
		\caption{Block diagram for collective AirCPU over a fading MAC.}
		\label{fig:compact_fading_mac_blocks}
				\vspace{-6pt}
	\end{figure}
	
	\vspace{-5pt}
	\section{Successive Out-of-Air Computation}
	\label{sec:succ-aircomp}
	
	In conventional digital communications, once a codeword has been successfully decoded, subtracting its contribution from the received signal can significantly facilitate subsequent decoding by reducing the remaining interference---a principle known as successive interference cancellation (SIC). An analogous idea for CF has also been studied in \cite{nazer_suc, azimi_cf2}. In the same spirit, we explore \emph{successive computation} in AirCPU, in which the target function $\mathbf{1}^\top \mathbf{U}^\ell$ is decoded by exploiting a previously decoded integer-coefficient function $\mathbf{a}_0^\top \mathbf{U}^\ell$. This side information is available in the form of $Q_{\alpha \Lambda_1}\!\big(\mathbf{b}_0^\top \mathbf{Y}^{\ell}\big)$, which is treated as correctly recovered in the reliable regime,
	where $\mathbf{a}_0$ denotes the integer coefficient vector of the previously decoded function representation and $\mathbf{b}_0$ is its corresponding equalization vector. The fusion center forms a refined estimate of the desired function by combining the current received signal with the side-information function through a structured linear operation.
	
	The resulting effective signal can be expressed as
	\begin{align}
		&\mathbf{b}_\text{s}^\top \mathbf{Y}^{\ell} + \beta \mathbf{a}_0^\top \mathbf{U}^\ell
		= \mathbf{1}^\top \mathbf{U}^{\ell}
		+ \big(\mathbf{b}_\text{s}^\top \mathbf{H}^{\ell} + \beta \mathbf{a}_0^\top - \mathbf{1}^\top \big)\mathbf{U}^{\ell}
		\nonumber\\&+ \mathbf{b}_\text{s}^\top \mathbf{N}^{\ell},
	\end{align}
	where $\mathbf{b}_\text{s} \in \mathbb{R}^{2M \times 1}$ and $\beta \in \mathbb{R}$ are equalization factors in this case. The resulting effective noise variance is
	\begin{align}
		\sigma^2_\text{e}(\mathbf{1}, \mathbf{b}_\text{s}, \beta \mid \mathbf{a}_0)
		= P \big\|\mathbf{b}_\text{s}^\top \mathbf{H}^{\ell} + \beta \mathbf{a}_0^\top - \mathbf{1}^\top \big\|^2
		+ \sigma_n^2 \|\mathbf{b}_\text{s}\|^2.
	\end{align}
	The target function $\sum_{k=1}^{K} \mathbf{x}_k^{\ell}$ is then decoded as  
	\begin{align}
		\mathbf{s}^{\ell}
		= \frac{{\rho}^{\,\ell-1}}{\alpha}
		Q_{\alpha \Lambda_1}\!\left(
		\mathbf{b}_\text{s}^\top \mathbf{Y}^{\ell} + \beta \mathbf{a}_0^\top \mathbf{U}^\ell
		\right).
	\end{align}
	\begin{theorem}
		\label{thm:succ-opt}
		The equalization factors that minimize $\sigma^2_\text{e}(\mathbf{1}, \mathbf{b}_\text{s}, \beta\mid \mathbf{a}_0)$ for fixed $\mathbf{a}_0$ are
		\begin{align}
			\label{sucb}
			\mathbf{b}_\text{sopt}^\top
			= (\mathbf{1} - \beta_\text{opt} \mathbf{a}_0)^\top {\mathbf{H}^\ell}^\top
			\left(\frac{1}{\text{SNR}}\mathbf{I} + {\mathbf{H}^\ell}{\mathbf{H}^\ell}^\top\right)^{-1},
		\end{align}
		and
		\begin{align}
			\label{suca}
			\beta_\text{opt}
			= \frac{\mathbf{a}_0^\top \big( \mathbf{I} + \text{SNR}\,{\mathbf{H}^\ell}^\top{\mathbf{H}^\ell} \big)^{-1}\mathbf{1}}
			{\mathbf{a}_0^\top \big( \mathbf{I} + \text{SNR}\,{\mathbf{H}^\ell}^\top{\mathbf{H}^\ell} \big)^{-1}\mathbf{a}_0 }.
		\end{align}
	\end{theorem}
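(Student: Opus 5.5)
The plan is a two-stage minimization, reusing Theorem~1. First, fix $\beta$ and minimize over $\mathbf{b}_\text{s}$ alone. Write $\tilde{\mathbf{a}} \triangleq \mathbf{1}-\beta\mathbf{a}_0 \in \mathbb{R}^K$. The objective then becomes
\begin{align}
\sigma^2_\text{e}(\mathbf{1},\mathbf{b}_\text{s},\beta\mid\mathbf{a}_0) = P\big\|\mathbf{b}_\text{s}^\top\mathbf{H}^\ell-\tilde{\mathbf{a}}^\top\big\|^2+\sigma_n^2\|\mathbf{b}_\text{s}\|^2 = \sigma_\text{e}^2(\tilde{\mathbf{a}},\mathbf{b}_\text{s}),
\end{align}
which is exactly \eqref{effective} with $\mathbf{a}$ replaced by $\tilde{\mathbf{a}}$. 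The proof of Theorem~1 never uses that $\mathbf{a}$ is integer; it only uses that the objective is a strictly convex quadratic in $\mathbf{b}$, because $P\mathbf{H}^\ell{\mathbf{H}^\ell}^\top+\sigma_n^2\mathbf{I}\succ 0$. Setting the gradient to zero therefore gives $\mathbf{b}_\text{s}^\top=\tilde{\mathbf{a}}^\top{\mathbf{H}^\ell}^\top(\frac{1}{\text{SNR}}\mathbf{I}+\mathbf{H}^\ell{\mathbf{H}^\ell}^\top)^{-1}$. Once $\beta=\beta_\text{opt}$ is fixed, this is \eqref{sucb}.

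Second, substitute this minimizer back into the objective. The same computation that led from \eqref{effective} to \eqref{dmse_mil} and, via the matrix inversion identity, to \eqref{dmsee} yields the concentrated objective
\begin{align}
g(\beta)\triangleq(\mathbf{1}-\beta\mathbf{a}_0)^\top\mathbf{Q}(\mathbf{1}-\beta\mathbf{a}_0),\qquad \mathbf{Q}=\big(\mathbf{I}+\text{SNR}\,{\mathbf{H}^\ell}^\top\mathbf{H}^\ell\big)^{-1}.
\end{align}
Expanding gives $g(\beta)=\mathbf{1}^\top\mathbf{Q}\mathbf{1}-2\beta\,\mathbf{a}_0^\top\mathbf{Q}\mathbf{1}+\beta^2\,\mathbf{a}_0^\top\mathbf{Q}\mathbf{a}_0$, where the cross term uses the symmetry of $\mathbf{Q}$.

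This is a scalar quadratic in $\beta$. Its leading coefficient $\mathbf{a}_0^\top\mathbf{Q}\mathbf{a}_0$ is strictly positive whenever $\mathbf{a}_0\neq\mathbf{0}$, since $\mathbf{Q}$ is positive definite as noted after \eqref{dmsee}. Hence $g$ is strictly convex, and setting $g'(\beta)=0$ gives the unique minimizer \eqref{suca}. The joint minimum over $(\mathbf{b}_\text{s},\beta)$ equals $\min_\beta\min_{\mathbf{b}_\text{s}}$. So the pair consisting of $\beta_\text{opt}$ and $\mathbf{b}_\text{sopt}$ evaluated at $\beta_\text{opt}$ is the global minimizer. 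An alternative check: the joint objective is itself a convex quadratic in $(\mathbf{b}_\text{s},\beta)$, so this stationary point is global.

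The only step needing care is the second one. It requires confirming that the inner-minimized value really collapses to $\tilde{\mathbf{a}}^\top\mathbf{Q}\tilde{\mathbf{a}}$ for a real-valued $\tilde{\mathbf{a}}$. This holds because the derivation of \eqref{dmsee} is purely algebraic in $\mathbf{a}$. We also need $\mathbf{a}_0\neq\mathbf{0}$ so that the denominator in \eqref{suca} is nonzero; this is implicit, since $\mathbf{a}_0$ indexes a nontrivial decoded function. Everything else is routine differentiation.
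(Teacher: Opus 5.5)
Your proposal is correct and follows essentially the same route as the paper's proof. In both, you set the gradient in $\mathbf{b}_\text{s}$ to zero for fixed $\beta$ (you obtain this by invoking Theorem~1 with the real vector $\mathbf{1}-\beta\mathbf{a}_0$), substitute back to get the scalar quadratic $(\beta\mathbf{a}_0-\mathbf{1})^\top\mathbf{Q}(\beta\mathbf{a}_0-\mathbf{1})$, and set its $\beta$-derivative to zero. Your strict-convexity and $\mathbf{a}_0\neq\mathbf{0}$ remarks add the global-optimality justification the paper leaves implicit; the overall factor $P$ that both you and \eqref{dmsee} omit from the concentrated objective is harmless, since it cancels in $\beta_\text{opt}$.
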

	
	\begin{IEEEproof}
		Expanding the variance expression gives
		\begin{align}
			&P \big(\mathbf{b}_\text{s}^\top \mathbf{H}^\ell + \beta \mathbf{a}_0^\top - \mathbf{1}^\top\big)
			\big({\mathbf{H}^\ell}^\top \mathbf{b}_\text{s} + \beta \mathbf{a}_0 - \mathbf{1}\big)
			+ \sigma_n^2 \|\mathbf{b}_\text{s}\|^2 \nonumber\\
			&= P\, \mathbf{b}_\text{s}^\top \mathbf{H}^\ell {\mathbf{H}^\ell}^\top \mathbf{b}_\text{s}
			+ 2P\, \mathbf{b}_\text{s}^\top \mathbf{H}^\ell (\beta \mathbf{a}_0 - \mathbf{1})
			+ P\beta^2 \|\mathbf{a}_0\|^2 
			\nonumber\\&- 2P\beta\, \mathbf{a}_0^\top \mathbf{1}
			+ \sigma_n^2 \mathbf{b}_\text{s}^\top \mathbf{b}_\text{s}
			+ P \mathbf{1}^\top \mathbf{1}.
		\end{align}
		Taking the derivative with respect to $\mathbf{b}_\text{s}$ and setting it to zero yields
		\begin{align}
			2P\, \mathbf{H}^\ell {\mathbf{H}^\ell}^\top \mathbf{b}_\text{s}
			+ 2P\, \mathbf{H}^\ell (\beta \mathbf{a}_0 - \mathbf{1})
			+ 2\sigma_n^2 \mathbf{b}_\text{s}
			= 0,
		\end{align}
	which yields the expression in \eqref{sucb}.
		
		Substituting $\mathbf{b}_\text{sopt}$ back into the variance expression results in the quadratic form
		\begin{align} &\hspace{-5pt}\sigma^2_\text{e}(\mathbf{1}, \beta  | \mathbf{a}_0) = (\beta \mathbf{a}_0 -\mathbf{1})^\top\left( \mathbf{I} +{\text{SNR}}{\mathbf{H}^\ell}^\top{\mathbf{H}^\ell}\right)^{-1}(\beta \mathbf{a}_0 -\mathbf{1}) \nonumber\\&= \beta^2 \mathbf{a}_0^\top\left( \mathbf{I} +{\text{SNR}}{\mathbf{H}^\ell}^\top{\mathbf{H}^\ell}\right)^{-1}\mathbf{a}_0 - 2\beta \mathbf{a}_0^\top \times \nonumber\\&\left( \mathbf{I} +{\text{SNR}}{\mathbf{H}^\ell}^\top{\mathbf{H}^\ell}\right)^{-1}\mathbf{1} + \mathbf{1}^\top \left( \mathbf{I} +{\text{SNR}}{\mathbf{H}^\ell}^\top{\mathbf{H}^\ell}\right)^{-1}\mathbf{1}.\hspace{-4pt} 
		\end{align}
		Differentiating with respect to $\beta$ and setting the result to zero gives 
		\begin{align} &2\beta \mathbf{a}_0^\top\left( \mathbf{I} +{\text{SNR}}{\mathbf{H}^\ell}^\top{\mathbf{H}^\ell}\right)^{-1}\mathbf{a}_0 \nonumber\\&- 2 \mathbf{a}_0^\top \left( \mathbf{I} +{\text{SNR}}{\mathbf{H}^\ell}^\top{\mathbf{H}^\ell}\right)^{-1}\mathbf{1} = 0, \end{align} which leads to \eqref{suca}.
	\end{IEEEproof}

	After substituting $\beta_\text{opt}$ into the quadratic expression 
	$\sigma^2_\text{e}(\mathbf{1}, \beta | \mathbf{a}_0)$, the resulting minimum effective-noise 
	variance for a fixed side-information coefficient vector $\mathbf{a}_0$ becomes
	\begin{align}
		&\sigma^2_\text{e}(\mathbf{1} \mid \mathbf{a}_0)
		= 
		\left(
		\frac{
			\mathbf{a}_0^\top 
			\left( \mathbf{I} + \text{SNR}\,{\mathbf{H}^\ell}^\top \mathbf{H}^\ell \right)^{-1}
			\mathbf{1}
		}{
			\mathbf{a}_0^\top 
			\left( \mathbf{I} + \text{SNR}\,{\mathbf{H}^\ell}^\top \mathbf{H}^\ell \right)^{-1}
			\mathbf{a}_0
		}\,
		\mathbf{a}_0
		- \mathbf{1}
		\right)^\top \nonumber\\& \times
		\left( \mathbf{I} + \text{SNR}\,{\mathbf{H}^\ell}^\top \mathbf{H}^\ell \right)^{-1}
		\nonumber\\
		&\times
		\left(
		\frac{
			\mathbf{a}_0^\top 
			\left( \mathbf{I} + \text{SNR}\,{\mathbf{H}^\ell}^\top \mathbf{H}^\ell \right)^{-1}
			\mathbf{1}
		}{
			\mathbf{a}_0^\top 
			\left( \mathbf{I} + \text{SNR}\,{\mathbf{H}^\ell}^\top \mathbf{H}^\ell \right)^{-1}
			\mathbf{a}_0
		}\,
		\mathbf{a}_0
		- \mathbf{1}
		\right).
	\end{align}
 
	The term inside the parentheses captures the mismatch between the desired coefficient vector 
	$\mathbf{1}$ and the best scaled projection of $\mathbf{a}_0$ under the metric imposed by the matrix
	$\big( \mathbf{I} + \text{SNR}\,{\mathbf{H}^\ell}^\top \mathbf{H}^\ell \big)^{-1}$.
	
	Accordingly, reliable successive computation requires
	\begin{align}
		\sigma_{\mathrm e}(\mathbf a_0),\;
		\sigma_{\mathrm e}(\mathbf 1 \mid \mathbf a_0)
		\;\le\;
		c_{\mathrm g}\frac{\sqrt{P}}{\rho},
		\label{succ_reliable}
	\end{align}
	and is governed by the larger of these two effective noise terms. To exploit this, the side-information vector $\mathbf{a}_0$ should be chosen to minimize the dominant effective noise, leading to the following constrained integer optimization problem:
	\begin{align}
		\label{optimalsuccessive}
	\underset{\mathbf{a}_0 \in \mathbb{Z}^{K}\backslash \mathbf{0}}{\mathrm{min}}
	\quad 
	&\max\left\{\sigma_{\mathrm{e}}^{2}\big(\mathbf{1} \mid \mathbf{a}_{0}\big),\, \sigma_{\mathrm{e}}^{2}\big(\mathbf{a}_{0}\big)\right\}
	\\
	\text{subject to}
	&\quad 
	\sigma_{\mathrm{e}}^{2}\big(\mathbf{a}_{0}\big) \leq \sigma_{\mathrm{e}}^{2}\big(\mathbf{1}\big), \nonumber\\
	&\quad 
	\mathbf{a}_0 \notin \{c\mathbf{1}: c\in\mathbb{Z}\backslash\{0\}\}, \nonumber
	\end{align}
	where the constraint ensures that the side-information function is decoded under a better condition than the direct target function, enabling its effective use in successive decoding. Thus, the optimal $\mathbf{a}_0$ offers the largest 
	successive-decoding gain for recovering $\mathbf{1}^\top \mathbf{U}^\ell$.
	
	The problem in \eqref{optimalsuccessive} is combinatorial and NP-hard in general. To obtain a low-complexity approximation of \eqref{optimalsuccessive}, we restrict the side-information vector to a two-group structure, using the same device partition $\mathcal K_1,\mathcal K_2$ introduced in the collective scheme. Specifically, we set
\begin{align}
	a_{0,k} =
	\begin{cases}
		p, & k \in \mathcal{K}_1,\\
		q, & k \in \mathcal{K}_2,
	\end{cases}
\end{align}
	where $p,q \in \mathcal A$ are integers selected from a bounded set. This restriction is tailored to the successive setting: since the target coefficient vector is $\mathbf 1$, a useful side-information vector should remain close to a uniform all-ones structure, while allowing different coefficients for the two reliability groups so as to trade off the two terms $\sigma_{\mathrm e}^2(\mathbf a_0)$ and $\sigma_{\mathrm e}^2(\mathbf 1\mid \mathbf a_0)$.
	
	The parameters $(p,q)$ are chosen by solving
	\begin{align}
		\label{twogroupsuccessive}
		\underset{p,q\in\mathcal A}{\mathrm{min}}
		\quad
		&\max\left\{\sigma_{\mathrm e}^{2}\big(\mathbf 1\mid \mathbf a_0\big),\,
		\sigma_{\mathrm e}^{2}\big(\mathbf a_0\big)\right\}
	\end{align}
	subject to
	\[
	\sigma_{\mathrm e}^{2}(\mathbf a_0)\le \sigma_{\mathrm e}^{2}(\mathbf 1), 
	\qquad p\neq q.
	\]
	This reduces the search from a $K$-dimensional integer optimization to a search over only two integers.

	Finally, the layered outputs are combined to produce 
	\(\mathbf{s} = \sum_{\ell=1}^{L} \mathbf{s}^{\ell}\), and removing the dithering terms and performing the inverse transformation as in \eqref{final_stage} yields the final reconstruction of the target function $\hat{\mathbf{u}}$.
	
	Thus, when the decoding error probability associated with both the side-information function and the target function is sufficiently small, the MSE of the successive computation can be approximated as
	\begin{align}
		\mathrm{MSE}
		= \mathbb{E}\!\left[\|\hat{\mathbf{u}}-\mathbf{u}\|^{2}\right]
		\approx nK\sigma^{2}(\Lambda_{1}),
	\end{align}
	where the distortion is dominated by lattice quantization in the reliable regime.
	
	Clearly, this successive approach can be extended to incorporate multiple previously decoded functions to further improve performance, which we leave for future work.
	
	\noindent\textbf{Out-of-Air Principle:}
	Successive computation further sharpens the out-of-air viewpoint by showing that previously extracted functions are not merely intermediate outputs, but reusable computational representations, and therefore constitute \textit{resources}. Once a reliable function has been decoded, it becomes a new piece of side information that can be injected back into the decoding process to reshape the effective uncertainty seen by the target function. 
	In this way, AirCPU does not rely on a single "shot" extraction from the channel; it progressively \emph{mines} and \emph{recombines} reliable abstractions already obtained \textit{out of the air}, so that each newly recovered function makes subsequent recovery easier and more reliable.
	
	\section{Simulation Results}
	We evaluate the three variants of the proposed AirCPU framework derived from the unified formulation developed in this paper, namely \emph{direct}, \emph{collective}, and \emph{successive} out-of-air computation. Unless stated otherwise, the term AirCPU in this section refers to the collective variant. We compare AirCPU with conventional analog AirComp and the digital AirComp scheme, referred to as SumComp \cite{saeed_sumcomp}.
	
Throughout this section, we employ a hexagonal lattice $\Lambda_1$, applied to vectors in pairs of components, with generator matrix \cite{fedcpu, eldar}
	\begin{equation}
		\mathbf{G}_{\text{hex}}
		=\delta
		\begin{bmatrix}
			1 & \tfrac{1}{2}\\[2pt]
			0 & \tfrac{\sqrt{3}}{2}
		\end{bmatrix},
	\end{equation}
	where $\delta>0$ controls the quantization resolution and scales the lattice.
	The local input data at each device are generated independently from a uniform distribution over the interval $[-1,1]$. For the fading MAC, the channel coefficients are generated independently according to a zero-mean unit-variance complex Gaussian distribution. All results are obtained via Monte Carlo simulations and reported in terms of MSE versus SNR. 
	
	\begin{figure}[tb!]
		\vspace{-4pt}
		\centering
		\includegraphics[width =3.0in]{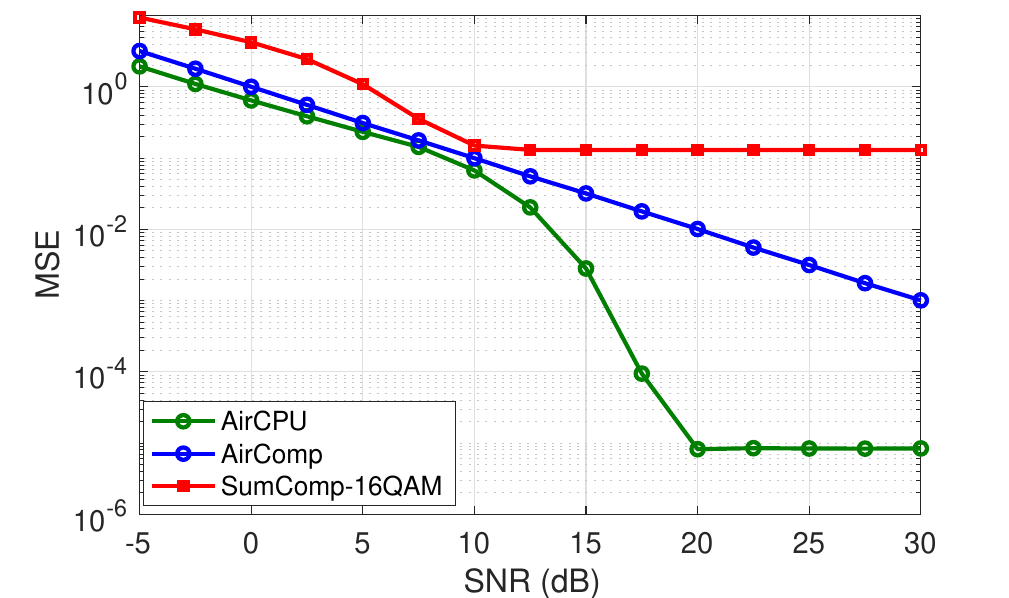} 
		\vspace{0pt}
		\caption{Performance comparison of AirCPU, AirComp, and SumComp \cite{saeed_sumcomp} over a Gaussian MAC for $K = 100$, $\rho = 3$, $\delta = 0.001$.}
		\vspace{-5pt}
	\end{figure}

	\begin{figure}[tb!]
		\vspace{-4pt}
		\centering
		\includegraphics[width =3.0in]{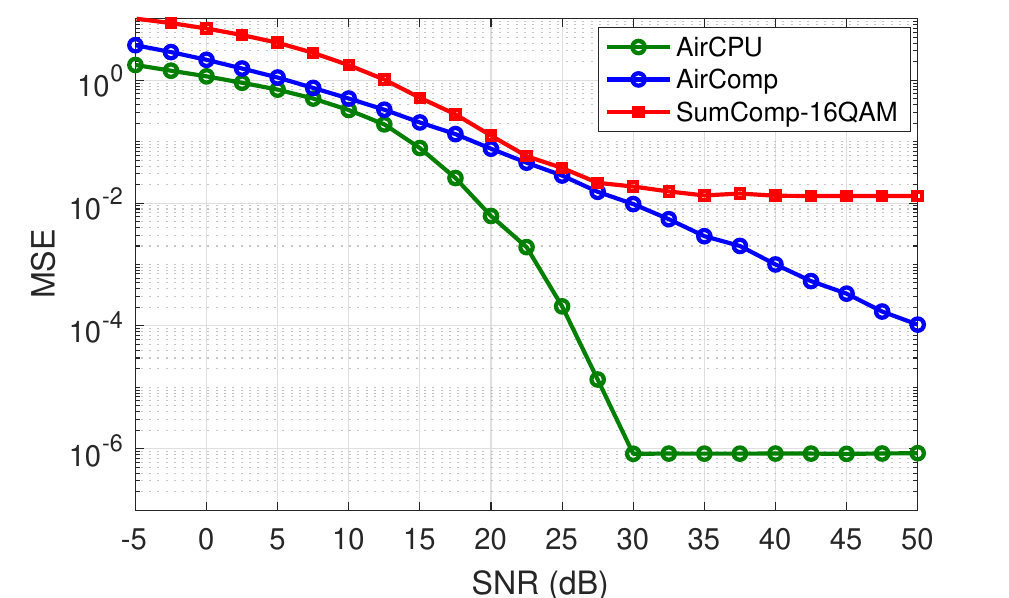} 
		\vspace{0pt}
		\caption{Performance comparison of AirCPU, AirComp, and SumComp \cite{saeed_sumcomp} over a fading MAC for $K = 10$, $M = 6$, $\rho = 3$, $\delta = 0.001$.}
		\vspace{-5pt}
	\end{figure}

	\begin{figure}[tb!]
		\vspace{-4pt}
		\centering
		\includegraphics[width =3.0in]{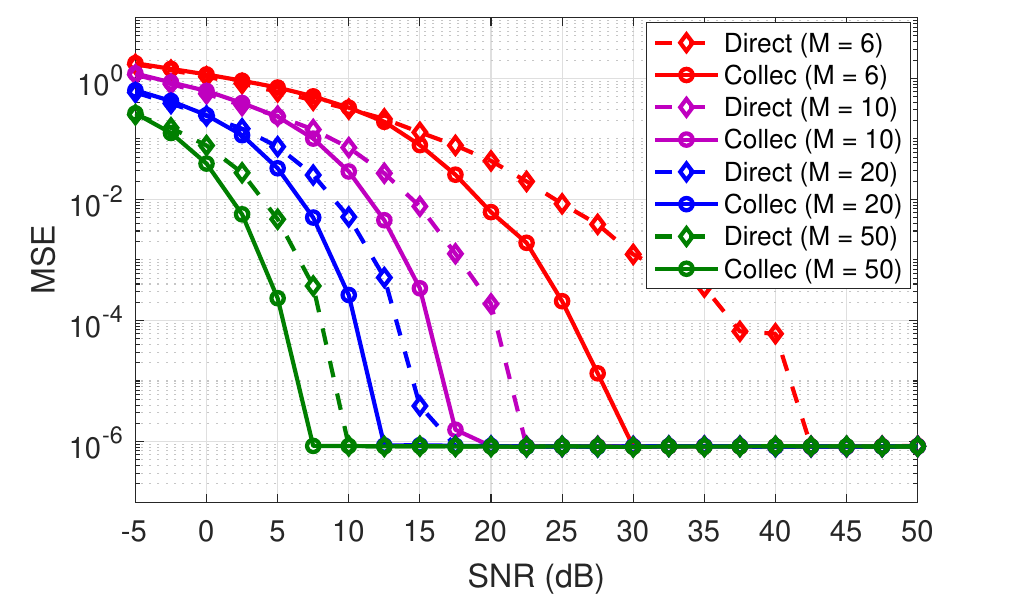} 
		\vspace{0pt}
		\caption{Performance of AirCPU, both direct and collective, over a fading MAC for different values of $M$, with $K=10$, $\rho=3$, and $\delta=0.001$.}
		\vspace{-5pt}
	\end{figure}
	
	\begin{figure}[tb!]
		\vspace{-4pt}
		\centering
		\includegraphics[width =3.0in]{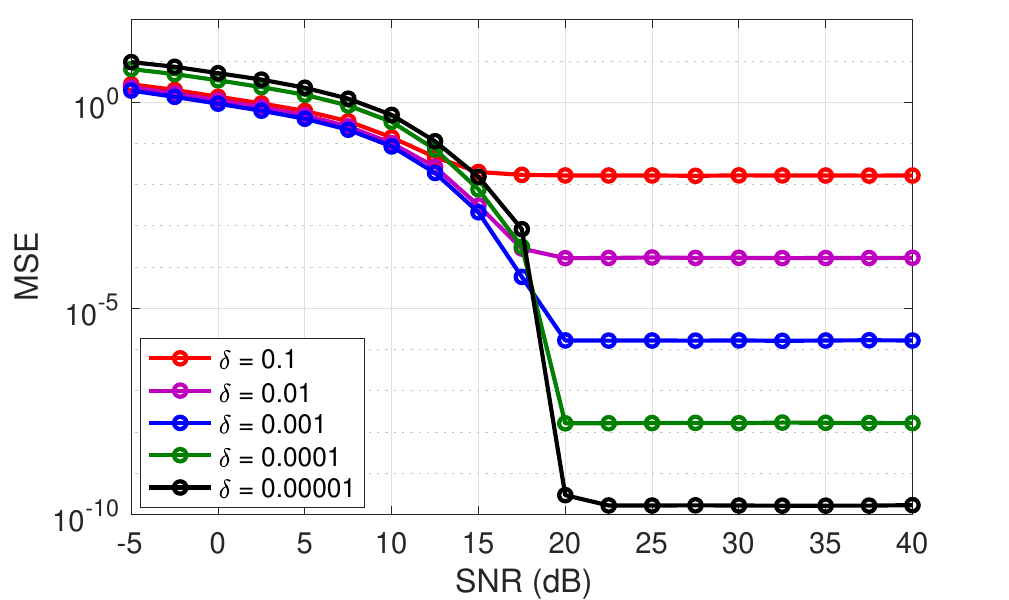} 
		\vspace{0pt}
		\caption{Performance of AirCPU over a fading MAC for different values of $\delta$, with $K=20$, $M=16$, and $\rho=3$.}
		\vspace{-5pt}
	\end{figure}

	\begin{figure}[tb!]
		\vspace{-4pt}
		\centering
		\includegraphics[width =3.0in]{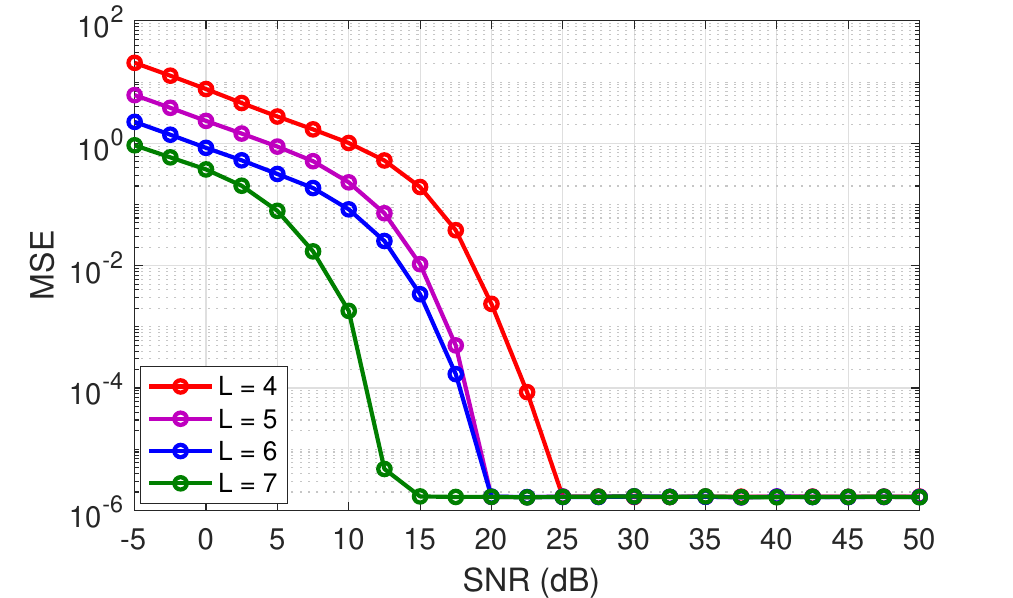} 
		\vspace{0pt}
		\caption{Performance of AirCPU over a fading MAC for different values of $L$, with $K=20$, $M=30$, and $\delta=0.001$.}
		\vspace{-5pt}
	\end{figure}

	\begin{figure}[tb!]
		\vspace{-4pt}
		\centering
		\includegraphics[width =3.0in]{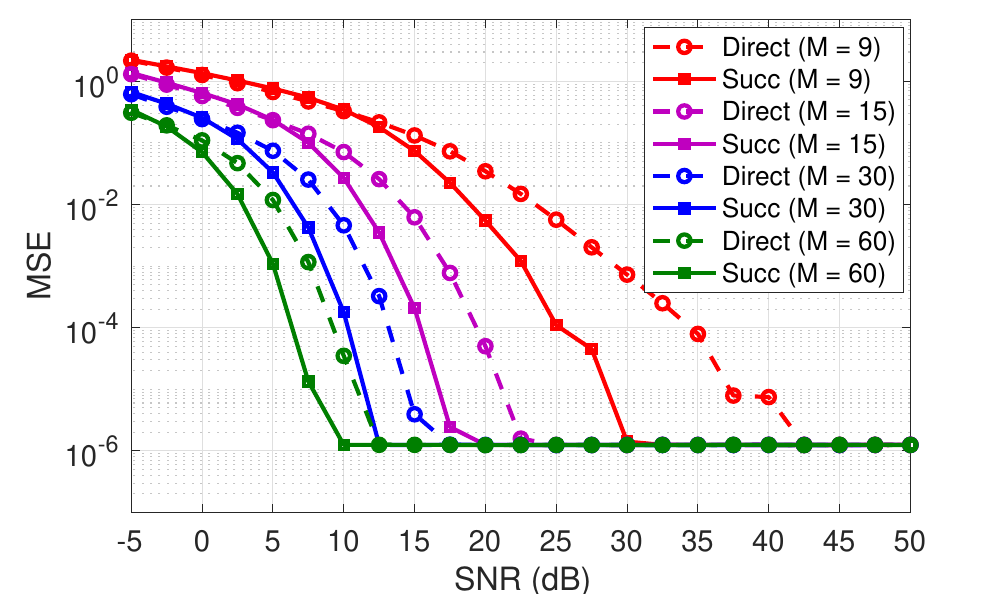} 
		\vspace{0pt}
		\caption{Performance of AirCPU, both direct and successive, over a fading MAC for different values of $M$, with $K=15$, $\rho=3$, and $\delta=0.001$.}
		\vspace{-5pt}
	\end{figure}
	
	\vspace{-5pt}
	\subsection{Baselines and fair comparison}
	
	All schemes employ linear receivers. For the Gaussian MAC, both baselines are applied directly.  
	For the fading MAC, to ensure a fair comparison, all schemes (AirCPU, analog AirComp, and SumComp) use the same optimal equalization vector derived in \eqref{bopt} specialized to the target coefficient vector $\mathbf{1}$. Under this setting, the resulting analog AirComp scheme can be viewed as an optimal variant of the blind scheme in \cite{gunduz_blind}, since it adopts the optimal equalizer rather than the suboptimal one originally considered. This guarantees that performance differences originate from the computation and coding mechanisms, rather than from unequal receive processing. We note that in the comparison results, AirCPU requires a higher number of transmissions than the considered baselines (approximately five times in our simulations) due to its multi-layer operation.
	
		\vspace{-5pt}
	\subsection{Gaussian MAC}
	Fig.~5 compares AirCPU, analog AirComp, and SumComp over a Gaussian MAC. AirCPU exhibits a reliability transition, after which the MSE saturates at a floor determined solely by the finest lattice, and further increases in SNR do not reduce the distortion. From \eqref{reliable}, setting $c_{\mathrm{g}} \approx 3$ (corresponding to the $99.73\%$ Gaussian concentration within three standard deviations) yields the approximate threshold $\mathrm{SNR} \approx 9\rho^{2}$, i.e., $19.1\,\mathrm{dB}$ for $\rho=3$, which matches well with Fig.~5. In contrast, analog AirComp remains noise-limited and improves smoothly with SNR, while SumComp is constrained by its finite-alphabet digital structure, including the modulation order, leading to a higher error floor.
	
	This observation highlights a modeling assumption in SumComp, namely that it operates on discretized symbols and effectively implements waveform superposition over finite alphabets \cite{saeed_sumcomp}. When the underlying sources are genuinely continuous-valued, SumComp based on prior discretization may incur additional quantization effects relative to analog transmission, as illustrated in Fig.~5. AirCPU differs from both paradigms by directly operating on continuous sources through a structured joint source--channel coding architecture, which leads to improved performance over analog AirComp and SumComp under the considered settings.
	
		\vspace{-5pt}
	\subsection{Fading MAC}
	Fig.~6 reports the same comparison over a fading MAC. AirCPU retains its reliability behavior under fading, reaching the finest quantization-determined floor. Both analog AirComp and SumComp exhibit significantly higher sensitivity to fading and require substantially larger SNR to approach the same MSE regime. This gain is enabled by the proposed collective computation mechanism, which exploits the structured function representations induced by the wireless channel to extract multiple reliable integer-coefficient functions from the same channel output. In this sense, the wireless medium is not treated as a passive impairment, but rather as an active source of structured function representations that can be mined and recombined to enhance computation performance.
	
		\vspace{-5pt}
	\subsection{Collective computation}
	Fig.~7 compares direct and collective computation within the AirCPU framework. By decoding multiple integer-coefficient functions that are individually more reliable and reconstructing the target function from these decoded functions, the collective variant enlarges the reliable operating region compared to direct decoding. This manifests as a leftward shift of the reliability transition and a reduced diversity requirement to reach the same distortion floor. Moreover, the results indicate that as the number of receive antennas increases, additional integer-coefficient functions become decodable at lower SNRs and can assist the target function earlier, while in the high-SNR regime the relative gain of collective decoding is more pronounced for smaller antenna numbers, since revealing additional structured function representations that can be decoded with lower effective noise significantly improves the decoding of the target function.
	
		\vspace{-5pt}
	\subsection{Resolution control}
	Fig.~8 illustrates the impact of the finest lattice resolution. As predicted by the analysis, the achievable MSE floor is directly controlled by the lattice resolution, while the SNR threshold for reliable operation shifts accordingly. This demonstrates that, in principle, arbitrary computation resolution can be targeted by AirCPU, thereby highlighting the inherent flexibility of the proposed framework.
	
		\vspace{-5pt}
	\subsection{Effect of the number of layers}
	Fig.~9 shows the impact of increasing the number of nested layers. A larger number of layers improves performance  by enabling progressive refinement across layers under the same bounded per-layer constellation, thereby expanding the SNR region in which reliable decoding is achieved.
	
		\vspace{-5pt}
	\subsection{Successive computation}
Fig.~10 compares direct and successive AirCPU. The results demonstrate consistent gains, particularly in moderate diversity regimes, where side information significantly lowers the SNR required to reach the reliable region. More specifically, when the number of receive antennas is small, successive decoding provides more pronounced improvements at higher SNRs, since previously decoded functions with lower effective noise become available and can be leveraged to further refine the target decoding. In contrast, for larger antenna numbers, successive computation becomes effective already at lower SNRs, as spatial diversity enables earlier extraction of function representations that assist the target function recovery.

\noindent\textbf{Implementation note:}
Since the associated optimal integer coefficient problems for the collective and successive schemes in \eqref{eq:indirect_minmax} and \eqref{optimalsuccessive} are combinatorial and computationally prohibitive, they are not solved exhaustively. Accordingly, in the simulations, we adopt the proposed low-complexity and tractable two-group approximations in \eqref{eq:two_group_minmax} and \eqref{twogroupsuccessive}. Despite the reduced search space, the resulting schemes still achieve clear performance gains compared to the considered baselines. Improved coefficient-selection strategies could further exploit the potential of the proposed framework beyond the current approximation.

	\section{Conclusions}
We proposed AirCPU as a computation-first physical-layer approach that reinterprets the wireless MAC as a structured function-extraction source with controllable and progressively refinable resolution. The key insight is that reliability can be characterized as a geometric decoding condition via a tractable inradius-based criterion. This condition governs the decoding error probability and determines whether the target resolution is achieved, yielding predictable computation accuracy that is largely decoupled from fading, noise, and constellation order when decoding errors are sufficiently small. This separates two traditionally entangled aspects in AirComp—noise sensitivity and resolution—by selecting resolution at design time while letting the channel determine its recovery. Alongside our direct computation, the proposed collective and successive computations reveal exploitable integer-coefficient functions inherent in the channel output and show that assembling these decoded structures significantly improves performance over fading MACs without CSIT-driven power control. We further formulated and characterized the underlying integer optimization problems and developed a structured low-complexity two-group approximation enabling scalable and practical implementation. Overall, the results suggest a shift in wireless aggregation design: instead of enforcing channel inversion or relying on asymptotically large arrays, decoded functions can be mined and reused as computational side information. Future directions include the development of efficient near-optimal coefficient-selection methods for collective and successive computation, as well as extensions to broader function classes, deeper successive structures, other wireless settings, and potential applications.


\begin{thebibliography}{1}
		\bibitem{nazer_general}
		B. Nazer and M. Gastpar, "Computation over multiple-access channels," \emph{IEEE Trans. Inf. Theory}, vol. 53, no. 10, pp.
		3498-3516, Oct. 2007.
		\bibitem{goldenbaum}
		M. Goldenbaum, H. Boche, and S. Stanczak, "Harnessing interference for analog function computation in wireless sensor
		networks," \emph{IEEE Trans. Signal Proc.}, vol. 61, no. 20, pp. 4893-4906, 2013.
		\bibitem{aircompsurvey}
		A. Sahin and R. Yang "A survey on over-the-air computation," \emph{IEEE Commun. Surv. Tutor.}, vol. 25, no. 3, pp. 1877-1908,
		2023.
		\bibitem{aircompmag}
		A. Perez-Neira, M. Martinez-Gost, A. Sahin, S. Razavikia, C. Fischione, and K. Huang, "Waveforms for computing over the air: A groundbreaking approach that redefines data aggregation," \emph{IEEE Signal Process. Mag.}, vol. 42, no. 2, pp. 57-77, Mar. 2025.
		\bibitem{airflmag}
		S. M. Azimi-Abarghouyi, C. Fischione, and K. Huang, "Over-the-air federated learning: Rethinking
		edge AI through signal processing," \emph{IEEE Signal Process. Mag.}, under review.
		
		\bibitem{kaibin_aircomp}
		X. Cao, G. Zhu, J. Xu, and K. Huang, "Optimized power control for over-the-air computation in fading channels," \emph{IEEE Trans. Wireless
			Commun.}, vol. 19, no. 11, pp. 7498-7513, Nov. 2023.
		
		\bibitem{kaibin_airfl}
		G. Zhu, Y. Wang, and K. Huang, "Broadband analog aggregation for
		low-latency federated edge learning," \emph{IEEE Trans. Wireless Commun.},
		vol. 19, no. 1, pp. 491-506, Jan. 2020.
		
		\bibitem{ding_airfl}
		K. Yang, et al., "Federated learning via over-the-air computation," \emph{IEEE Trans. Wireless Commun.}, vol. 19, no. 3, pp. 2022-2035, 2020.
		\bibitem{gunduz_airfl}
		M. M. Amiri and D. Gunduz, "Federated learning over wireless fading
		channels," \emph{IEEE Trans. Wireless Commun.}, vol. 19, no. 5, pp. 3546-
		3557, 2020.
		\bibitem{deniz_mis}
		Y. Shao, D. Gunduz, and S. C. Liew, "Federated edge learning with misaligned over-the-air computation," \emph{IEEE Trans. Wireless Commun.}, vol. 21, no. 6, pp. 3951-3964, June 2022.
		
		\bibitem{kaibin_aircomp2}
		Z. Lin, Y. Gong, and K. Huang, "Distributed over-the-air computing for fast
		distributed optimization: Beamforming design
		and convergence analysis," \emph{IEEE J. Sel. Areas Commun.}, vol. 41, no. 1, pp. 274-287, Jan. 2023.
		\bibitem{azimi_wafel}
		S. M. Azimi-Abarghouyi and L. Tassiulas, "Over-the-air federated learning via weighted aggregation," \emph{IEEE Trans. Wireless Commun.}, vol. 23, no. 12, pp. 18240-18253, Dec. 2024.
		\bibitem{deniz_inference}
		C. Bian, M. Hua, and D. Gunduz, "Over-the-air inference through analog computation over multi-hop MIMO networks," \emph{IEEE Wireless Commun. Lett.}, vol. 14, no. 11, pp. 3739-3743, Nov. 2025.
		\bibitem{huang_neural}
		Y. Cang,  M. Chen, and K. Huang, "Over-the-air computation for realizing neural
		link in in-network AI architectures," \emph{IEEE Trans. Wireless Commun.}, vol. 25, pp. 2373-2388, 2026.
		\bibitem{gunduz_blind}
		M. Mohammadi Amiri, et al., "Blind federated edge learning," \emph{IEEE Trans. Wireless
			Commun.}, vol. 20, no. 8, pp. 5129-5143, Aug. 2021.
		
		\bibitem{deniz_onebit}
		G. Zhu, Y. Du, D. Gunduz, and K. Huang, "One-bit over-the-air
		aggregation for communication-efficient federated edge learning: Design
		and convergence analysis," \emph{IEEE Trans. Wireless Commun.}, vol. 20,
		no. 3, pp. 2120-2135, Mar. 2021.
		\bibitem{saeed_channelcomp}
		S. Razavikia, J. M. B. Da Silva, and C. Fischione, "ChannelComp:
		A general method for computation by communications," \emph{IEEE Trans. Commun.}, vol. 72, no. 2, pp. 692-706, 2024.
		\bibitem{saeed_sumcomp}
		S. Razavikia, J. M. B. Da Silva, and C. Fischione, "SumComp: Coding for digital over-the-air computation via the ring of integers," \emph{IEEE Trans. Commun.}, vol. 73, no. 2, pp. 752-767, Feb. 2025.
		\bibitem{saeed_blind}
		S. Razavikia, J. M. B. Da Silva, and C. Fischione, "Blind federated learning via over-the-air q-QAM," \emph{IEEE Trans. Wireless Commun.}, vol. 23, no. 12, pp. 19570-19586, Dec. 2024.
		
		\bibitem{deniz_digital}
		L. Qiao, Z. Gao, M. B. Mashhadi, and D. Gunduz, "Massive digital over-the-air computation for
		communication-efficient federated edge learning," \emph{IEEE J. Sel. Areas Commun.}, vol. 42, no. 11, pp. 3078-3094, Nov. 2024.
		\bibitem{brinton_digital}
		S. Wang, M. Chen,
		C. Shen, C. Yin, and C. G. Brinton, "Digital over-the-air federated learning in
		multi-antenna systems," \emph{IEEE Trans. Wireless Commun.}, vol. 23, no. 10, pp. 15125-15141, Oct. 2024. 
		\bibitem{kaibin_digital}
		J. Liu, Y. Gong, and K. Huang, "Digital over-the-air computation: Achieving high
		reliability via bit-slicing," \emph{IEEE Trans. Wireless Commun.}, vol. 24, no. 5, pp.  4101-4114, May 2025.
		\bibitem{huang_digital}
		Z. Wang, J. Yao, W. Xu, W. Shi, and K. Huang, "Digitalizing over-the-air computation via the novel
		complement coded modulation," \emph{IEEE Commun. Lett.}, vol. 30, pp. 812-816, 2026.
		\bibitem{zey_digital}
		Z. Li, C. Chen, and C. Fischione, "Channel-aware constellation design for digital
		OTA computation," arXiv: 2501.14675
		\bibitem{saeed_optimal}
		S. Razavikia, D. Gunduz, and C. Fischione, "Towards optimal constellation design for digital over-the-air computation," arXiv: 2511.06372
		\bibitem{saeed_vector}
		S. Razavikia, J. M. B. Da Silva, and C. Fischione, "VecComp: Vector computing via MIMO digital over-the-air computation," arxiv: 2511.02765
		
		
		\bibitem{book}
		K. L. Du and M. N. S. Swamy, \emph{Wireless Communication Systems: From Fundamentals to 6G}. Cambridge Univ. Press, 2026.		
		\bibitem{fedcpu}
		S. M. Azimi-Abarghouyi and L. R. Varshney, "Compute-update federated learning: A lattice coding approach," \emph{IEEE Trans. Signal Process.}, vol. 72,  pp. 5213-5227, 2024.
		
		\bibitem{fedcpu_conf}
		S. M. Azimi-Abarghouyi and L. R. Varshney, "Federated learning via lattice joint source-channel coding," \emph{IEEE ISIT}, July 2024.
		
		\bibitem{rzamir}
		R. Zamir and M. Feder, "On universal quantization by randomized uniform/
		lattice quantizers," \emph{IEEE Trans. Inf. Theory}, vol. 38, no. 2, pp. 428-436, Mar. 1992.
		
		\bibitem{eldar}
		N. Shlezinger, M. Chen, Y. C. Eldar, H. V. Poor, and S. Cui, "UVeQFed: Universal vector quantization for
		federated learning," \emph{IEEE Trans. Signal Proc.}, vol. 69, pp. 500-514, 2021.
		
		\bibitem{nazer}
		B. Nazer and M. Gastpar, "Compute-and-forward: Harnessing interference through structured codes," \emph{IEEE Trans. Inf. Theory}, vol. 57,
		no. 10, pp. 6463-6486, 2011.
		
		\bibitem{nazermimo}
		J. Zhan, B. Nazer, M. Gastpar, and U. Erez, "MIMO compute-and-forward," \emph{IEEE Int. Symp. Inf. Theory (ISIT)}, Seoul, Korea, July 2009.
		
		\bibitem{azimi_cf1}
		S. M. Azimi-Abarghouyi, M. Hejazi, B. Makki, M. Nasiri-Kenari, and T. Svensson, "Decentralized compute-and-forward for ad hoc networks," \emph{IEEE Wireless Commun. Lett.}, vol. 5, no. 6, pp. 652-655, Dec. 2016.
		
		\bibitem{nazer_suc}
		B Nazer, "Successive compute-and-forward," \emph{International Zurich Seminar on Communications (IZS)}, 2012.
		
		
		\bibitem{azimi_cf2} 
		M. Hejazi, S. M. Azimi-Abarghouyi, B. Makki, M. Nasiri-Kenari, and T. Svensson, "Robust successive compute-and-forward over multi-user multi-relay networks," \emph{IEEE Trans. Veh. Technol.}, vol. 65, no. 10, pp. 8112-8129, Oct. 2016.
		
		
		\bibitem{aluini}
		P. K. Vitthaladevuni and M. S. Alouini, "A recursive algorithm for the exact BER computation of
		generalized hierarchical QAM constellations," \emph{IEEE Trans. Inf. Theory}, vol. 49, no. 1, pp. 297-307, 2003.
		\bibitem{saeed}
		S. Razavikia, M. Kazemi, D. Gunduz, and C. Fischione,"Function computation over multiple access channels
		via hierarchical constellations," arXiv: 2601.12050
		
		
		
	\end{thebibliography}
\end{document}